\title{Nonparametric causal inference from observational time
  series through marginal integration} 
\author{Shu Li\thanks{
                  Seminar for Statistics, ETH Z\"urich, last name @ stat.math.ethz.ch}\hspace{0.15cm}\thanks{Supported by SNF 2-77991-14}
	\and
        Jan Ernest\footnotemark[1]\hspace{0.15cm}\thanks{Supported in part by the Max Planck ETH Center for Learning Systems}
        \and
        Peter B\"uhlmann\footnotemark[1]
        }
\newtheorem{theorem}{Theorem}
\newtheorem{lemma}{Lemma}
\newtheorem{assumption}{Assumption}
\newtheorem{definition}{Definition}
\newtheorem{remark}{Remark}
\newcommand{\E}{\mathbb{E}}
\newcommand{\ARMA}{\mathrm{ARMA}}
\newcommand{\GARCH}{\mathrm{GARCH}}
\newcommand{\AR}{\mathrm{AR}}
\newcommand{\ARCH}{\mathrm{ARCH}}
\newcommand{\R}{\mathbb{R}}
\newcommand{\N}{\mathbb{N}}
\newcommand{\Z}{\mathbb{Z}}
\newcommand{\Superset}{\mathcal{S}}
\newcommand{\Cov}{\mathrm{Cov}}
\newcommand{\Var}{\mathrm{Var}}
\newcommand{\supp}{\mathrm{supp}}
\newcommand{\Do}{\mathrm{do}}
\newcommand{\pa}{\mathrm{pa}}
\newcommand{\argminn}{\operatornamewithlimits{argmin}}
\begin{document}
\newpage
\maketitle
\begin{abstract}
Causal inference from observational data is an ambitious but highly relevant task,
with diverse applications ranging from natural to social sciences.
Within the scope of nonparametric time series, 
causal inference defined through interventions (cf. Pearl~\citep{pearl}) is largely unexplored, although time order simplifies the problem
substantially. 
We consider a marginal integration scheme for
inferring causal effects from observational time series data, \mbox{MINT-T}
(\textbf{m}arginal \textbf{int}egration in \textbf{t}ime series), which is
an adaptation for time series of a method proposed by Ernest and B\"{u}hlmann (Electron. J. Statist, pp. 3155-3194, vol. 9, 2015)
for the case of independent data. Our approach for stationary stochastic
processes is fully nonparametric and, assuming no instantaneous
  effects consistently recovers the total
causal effect 
of a single intervention with optimal one-dimensional nonparametric
convergence rate $n^{-2/5}$ assuming regularity conditions and twice differentiability of a certain corresponding 
regression function. Therefore, \mbox{MINT-T}
remains largely unaffected by the curse of dimensionality as
  long as smoothness conditions hold in higher dimensions and it is
feasible for a large class of stationary time series, including nonlinear
and multivariate processes. For the case with instantaneous effects,
  we provide a procedure which guards against false positive causal
  statements. 
\end{abstract}  


\section{Introduction}

In a time series setting, causal reasoning revolves predominantly around
Granger causality \cite{granger1969}. Roughly speaking,
a time series is Granger causal to 
another one if knowing the past of the former helps predict the
future of the latter, given all other available
  information of the past (e.g., from the second or other time
  series). This is a concept of "predictive
    causality''. Furthermore, Granger causality is 
  measuring a direct effect, e.g., its
  targets are the entries of the coefficient matrix in a vector
  autoregressive model. Here, we are considering total causal effects
  describing the total effect of an intervention at a (single) variable:
  although less ambitious than direct effects, we will argue here that
  total effects are much more feasible to infer in a model-free,
  nonparametric way. For this task, we use the framework of causal  
reasoning through interventions as described by \citet{pearl} or 
Spirtes et al.~\cite{Spirtes2000}. It is largely unexplored for the case of time
series,
although time dependence has plenty to offer in this particular setting. 
From our perception of time, the present is affected by the past but not vice
versa. 
We also commonly note that causes precede their effects. Both
relations are inherently asymmetric. In the simplest setting of two
correlated random variables without hidden confounders, we are not able to tell
apart the cause from its effect without making any additional assumptions
in the i.i.d.\ setting. This is in contrast to time series where in
  general we can distinguish cause from effect by 
looking at the order in time thereby assuming that the time resolution of the
  measurements is higher than the timescale of causal influences, that is,
  there are no instantaneous effects \cite{hyvarinen2008causal}. In case of
  instantaneous effects, one can still derive some interesting statements
  as we discuss in Section \ref{sec:instant}. Time ordering is
a simplification of a more general phenomenon, though. 
In general, the estimation of causal models in the i.i.d.\ case involves
finding a causal ordering of random variables, and this often involves
unverifiable assumptions such as faithfulness and sophisticated structure
search algorithms such as GES~\citep{chick02}, PC~\citep{Spirtes2000} or
CAM~\citep{buhlmann2014cam}. Moreover, without making further assumptions
on the data-generating process, the causal model is typically only
identifiable up to an equivalence class of valid causal orderings. Within
time series, we can often ignore the identifiability aspects and skip the
complicated part of structure learning by simply propagating time as the
causal order.  

We consider the average total causal effect which is defined through Pearl's $\Do$-operator~\citep{pearl}:
\begin{eqnarray}\label{causaleff}
\E[X_t\,\vert\, \Do(X_{t-s}=x)],
\end{eqnarray}
where the $\Do$-operator encodes an external intervention by setting the
random  variable $X_{t-s}$ to the fixed deterministic value $x$; here $X_t$
and $X_{t-s}$ are real-valued. We often
refer to this quantity as the causal effect of $X_{t-s}$ on $X_t$ (and suppress the words "average total"). For a
stationary time series $({X}_t)_{t \in \Z}$, the causal effect is invariant
under time shifts in $t$. In case of a multivariate time series, we would
consider components $X_{c_1,t}$ and $X_{c_2,t-s}$ in
\eqref{causaleff}. The quantity in \eqref{causaleff} is a general function
of $x$, for each value of $s$ (and in the multivariate case it also depends on $c_1$ and $c_2$). Our goal is to estimate this function of $x$ in a
fully nonparametric way without relying on a specific model specification
for the underlying time series. 

A simple example should help to illustrate our goal. Consider a stationary
$\AR(p)$-model: $X_t=\sum_{j=1}^p \phi_j X_{t-j}+\epsilon_t$, where $p$ is
the order of the Markovian process, and we assume the $\epsilon_t$'s
to be i.i.d. with mean zero and $\epsilon_t$ to be independent of
  $\{X_s;\ s< t\}$. Then,
\begin{eqnarray*}
\E[X_t\,\vert\, \Do(X_{t-s}=x)]&=&\sum_{j=1}^p\phi_j
                                      \E[X_{t-j}\,\vert\,
                                      \Do(X_{t-s}=x)]. 
\end{eqnarray*}
\begin{sloppypar}
For a fixed intervention value $x$, the causal effect can be calculated
recursively via $g_x(s):=\E[X_t\,\vert\,
\Do(X_{t-s}=x)]=\sum_{j=1}^p\phi_j g_x(s-j)$ with initial values
$g_x(0)=\E[X_t\,\vert\, \Do(X_{t}=x)]=x$ and $g_x(-s)=\E[X_t\,\vert\,
\Do(X_{t+s}=x)]=\E[X_t]=0$ for $s>0$. Since $g_x(s)$ is a
linear combination of previous $\{g_x(s-i)\,\vert\, i>0\}$, we thus see that
the causal effect $\E[X_t\,\vert\,\Do(X_{t-s}=x)]$ is a linear function of the
intervention value $x$. Furthermore, $\E[X_t\,\vert\, \Do(X_{t-s}=x)]$ 
is geometrically decaying because $g_x(s)$ satisfies the skeleton equation of
the stationary AR(p)-model. However, if the true underlying data generating
stationary process is nonlinear, perhaps with non-additive innovation
terms, the causal effect is typically a nonlinear function of $x$ and such a recursive formulation is difficult and not useful for the
task of estimating the causal effect. \end{sloppypar}

Instead, we will use an approach
based on Pearl's backdoor adjustment formula~\citep{pearl}. The
interventional density can be calculated from the observational density
\begin{equation}
\label{eq:backdoor}
p(x_{t}\,\vert\, \Do(X_{t-s}=x))=\int p(x_{t}\,\vert\, X_{t-s}=x, X_{t-s}^\Superset)dP( X_{t-s}^\Superset),
\end{equation}
where $X_{t-s}^\Superset$ denotes a so-called adjustment set of $X_{t-s}$, and
the expression in \eqref{eq:backdoor} is invariant under shifts in $t$ due
to the stationarity assumption. In a
time series setting with a data-generating univariate Markovian
process, the adjustment set 
can be chosen as $X_{t-s}^\Superset = \{X_{t-s-1},\ldots,X_{t-s-p}\} $
, the previous instances in time 
up to a time lag $p\in \N^0$ where the value $p$ has to be at least as
large as the order of the Markovian process. Thus, for stationary
Markovian processes we obtain: 
\begin{eqnarray}\label{backdoor2}
\E[X_{t}\,\vert\, \Do(X_{t-s}=x)]=\int \E[X_{t}\,\vert\, X_{t-s}=x, X_{t-s}^\Superset] dP(X_{t-s}^\Superset).
\end{eqnarray}
In particular, \eqref{backdoor2} illustrates that we can relate the causal
effect to integrating the regression function $m(x,X_{t-s}^\Superset) =
\E[X_{t}\,\vert\, X_{t-s}=x, X_{t-s}^\Superset]$ over the adjustment
variables $X_{t-s}^\Superset$, and this is the key property to construct an
estimator in a model free way. This line of reasoning can be extended to
multivariate $l$-dimensional time series where causal statements are often more
interesting: we aim to estimate 
\begin{eqnarray*}
\E[\mathbf{X}_{t}\,\vert\, \Do(X_{c,t-s}=x)],
\end{eqnarray*}
where $c \in \{1,\ldots ,l\}$ denotes a single component. The same formula
\eqref{backdoor2} applies when using an appropriate adjustment set $\mathbf{X}_{t-s}^\Superset$. The
latter will depend whether there are time-instantaneous effects across the
different components or not.
We show that when having no instantaneous effects, assuming regularity
conditions and 
twice differentiability of $\E[\mathbf{X}_{t}\,\vert\, X_{c,t-s}=x,
\mathbf{X}_{t-s}^\Superset]$ with respect to $x$, our proposed marginal
integration estimator achieves the $n^{-2/5}$ convergence rate for
estimating the true causal effect $\E[\mathbf{X}_{t}\,\vert\,
\Do(X_{c,t-s}=x)]$. In case of instantaneous effects, we will
derive a conservative procedure which does not require any knowledge about
the true underlying structure of the Markovian process.

\subsection{Related work and our contribution}

Nonparametric estimation of causal effects of continuous
(intervention/treatment) random variables
with marginal integration of the regression function has been proposed by
\citet{ernest2015} in the 
i.i.d.\ setting. The difficulty there is that the adjustment in
\eqref{eq:backdoor} has to be estimated from data, unless it is known. This
task is rather delicate, as it amounts to determining a superset of the 
parents of the intervention variable in a structural equation model; and it
is difficult to imagine that this could be accurately done in a
nonparametric way. 

Here, with Markovian time series and without instantaneous effects
  in multivariate settings, a valid adjustment
set only needs the knowledge or a good estimate of an upper bound of the
Markovian order. This makes our procedure much more reliable, and in fact,
our implementation of marginal integration is somewhat different and more
direct than the proposal in \citet{ernest2015}. Our investigation
  for the case with instantaneous effects in multivariate scenarios is novel
  and specific to time series problems. Regarding the mathematical
analysis, the technical derivation of the convergence rate for estimating causal effects is 
substantially more demanding and we extend the existing theory to strongly
mixing stationary multivariate processes. Thereby, a fully
  nonparametric Markovian process setting with smoothness and additional
  regularity conditions is assumed. Earlier work on the asymptotic theory
  of marginal integration in a time series setting has been considered in
 \citet{tjostheim1994}, \citet{masry1995} and \citet{masry1997}. These contributions consider special forms of a
  nonparametric time series model, i.e., univariate, ARCH- and ARX-models, whereas our focus is on a more model free
  approach. It is fair to say though that our theoretical analysis shares
  some similarities with the previous works mentioned above.

There is a substantial literature on nonparametric estimation of causal
effects of binary treatment or intervention variables
\citep{robins2009,li2011}. For continuous treatments or intervention
variables, recent work also includes Kennedy et al.~\cite{kennedy2015}, besides the marginal
integration approach \citep{ernest2015}. 
Marginal integration itself has been proposed by \citet{linton1995} for smooth
function estimation in structured nonparametric regression, mainly for
additive models. The theoretical analysis in \citet{fan1998} is the basis
for inferring causal effects in the setting of independent data
\citep{ernest2015}. We are not aware of any other work which considers
estimation of causal effects, as defined in \eqref{causaleff}, for the
setting of stationary time series. 

\section{Marginal integration for time series}
\label{section:theory}

The marginal integration regression method was first proposed by
\citet{linton1995} in the context of additive regression modelling. It is
based on the idea that each component of an additive model can be obtained
by weighted marginal integration of the regression function. Adopting the
notation from regression, let $Y$ be a real-valued response variable, $X_1
\in \R$ and $X_2 \in \R^q$ a set of random (co-)variables. We denote the
regression function $\E[Y\,\vert\,X_1=x_1,\,X_2=x_2]$ by $m(x_1,x_2)$.
\citet{fan1998} show that the one-dimensional function of $x_1$ when
marginalised over $X_2$ 
, i.e., $\E[W(X_2) m(x_1,X_2)]$, can be consistently estimated by weighted
marginal empirical integration of 
the estimated regression function with the one-dimensional
nonparametric convergence rate under certain smoothness
conditions. Here, $W(\cdot)$ is a real-valued weight function 
satisfying $\E[W(X_2)]=1$. 
The presence of the additional variables $X_2$ does not add complexity to
the estimation of  $\E[W(X_2) m(x_1, X_2)]$ asymptotically as 
they are averaged out by integration. The one-dimensional
  marginalised function in $x_1$
has a striking resemblance with the causal effect of an intervention $
\Do(X_{1}=x_1)$ on $Y$. In fact, as in the derivation of
\eqref{backdoor2},
\begin{eqnarray}
& &\E[Y\,\vert\, \Do(X_{1}=x_1)] =\int y p(y\,\vert\, \Do(X_{1}=x_1))dy\nonumber \\
&=&\int \int y p(y\,\vert\, X_{1}=x_1, X_{1}^\Superset)dP( X_{1}^\Superset) dy \label{eq:insertbackdoor}\\
&=&\int \E[Y \,\vert\, X_{1}=x_1, X_{1}^\Superset]dP( X_{1}^\Superset)
    =\E[m(x_1, X_2)]\label{eq:interchange},
\end{eqnarray}
by inserting Pearl's backdoor adjustment formula \eqref{eq:backdoor} in equation
\eqref{eq:insertbackdoor}, assuming that we can interchange the order of
integration, and choosing $X_2 = X_1^\Superset$ in equation
\eqref{eq:interchange}. 
Hence, the causal 
effect $\E[Y\,\vert\, \Do(X_{1}=x_1)]$ can be estimated by marginal
integration of $m(x, X_{1}^\Superset)$.

\subsection{The setting without instantaneous effects and the estimator}
In a time series context, the theory of marginal integration needs to be
extended in order for the same asymptotic results to hold. We consider the
following set-up. 
The data is a finite realization of a strictly stationary and strongly
mixing \citep{fan2005} multivariate Markovian process $(\mathbf{X}_t)_{t
  \in \Z}$ of order $p_0$  and with $l$ 
components, that is, 
$\mathbf{X}_t=(X_{1,t},\ldots,X_{l,t})$ for every $t \in \Z$. The number of
components $l$ is arbitrary (but fixed) and includes the univariate
case. The sequence can display serial dependence between variables
within the same and also between different components. We also assume that the stochastic process can be represented in the
  form of a structural equation model (SEM) which remains invariant across time
  $t$. The SEM consists of a system of equations $\{X_{c,t}=f_c(X_{\pa(c,t)},\epsilon_{c,t});\ c=1,\ldots ,l;\ t \in \Z\}$ in which $\pa(c,t)$ denotes the parental set or the set of direct causes of $(c,t)$ and $\epsilon_{c,t}$ denotes the noise term. We then require that the
  Markovian process has the following property:   
\begin{eqnarray}\label{SEM}
& &{\cal L}(X_{c,t}\mid\mathbf{X}_{t-1},\mathbf{X}_{t-2},\ldots,\mathbf{X}_{t-p_0}) =
  {\cal L}(X_{c,t}\mid X_{\pa(c,t)}),\nonumber\\
&  &\{X_{c,t}\mid X_{\pa(c,t)};\ c=1,\ldots ,l;\ t \in \Z\}\ \mbox{are
     conditionally jointly independent}.
\end{eqnarray}  
We assume here that there are no instantaneous effects, that is $\pa(c,t)
\subseteq \{(d,s);\ d=1,\ldots ,l;\ s=t-1,\ldots ,t-p_0\}$.
According to the SEM, one can construct an infinite directed acyclic graph (DAG). The random variables $\{X_{c,t};\ c=1,\ldots ,l;\ t \in \Z \}$
  correspond to nodes in the DAG and the edges are drawn from each variable in a parental set to its effects. Due to
  stationarity, the DAG does not change over time $t$ and due to the
  Markovian structure, it is sufficient to represent it by a DAG $D$ for
  the time points  $t, t-1,\ldots ,t-p_0$. In terms of the
graph $D$, no instantaneous effects means that the edges are directed forward in time, and there
are no directed edges across the different components at the same time
point. An example is given in Figure \ref{fig:ex_graph}. We will relax the assumption of having no instantaneous effects in
Section \ref{sec:instant}. 

\begin{figure}[!htb]
    \centering
    \includegraphics[width=0.65\textwidth]{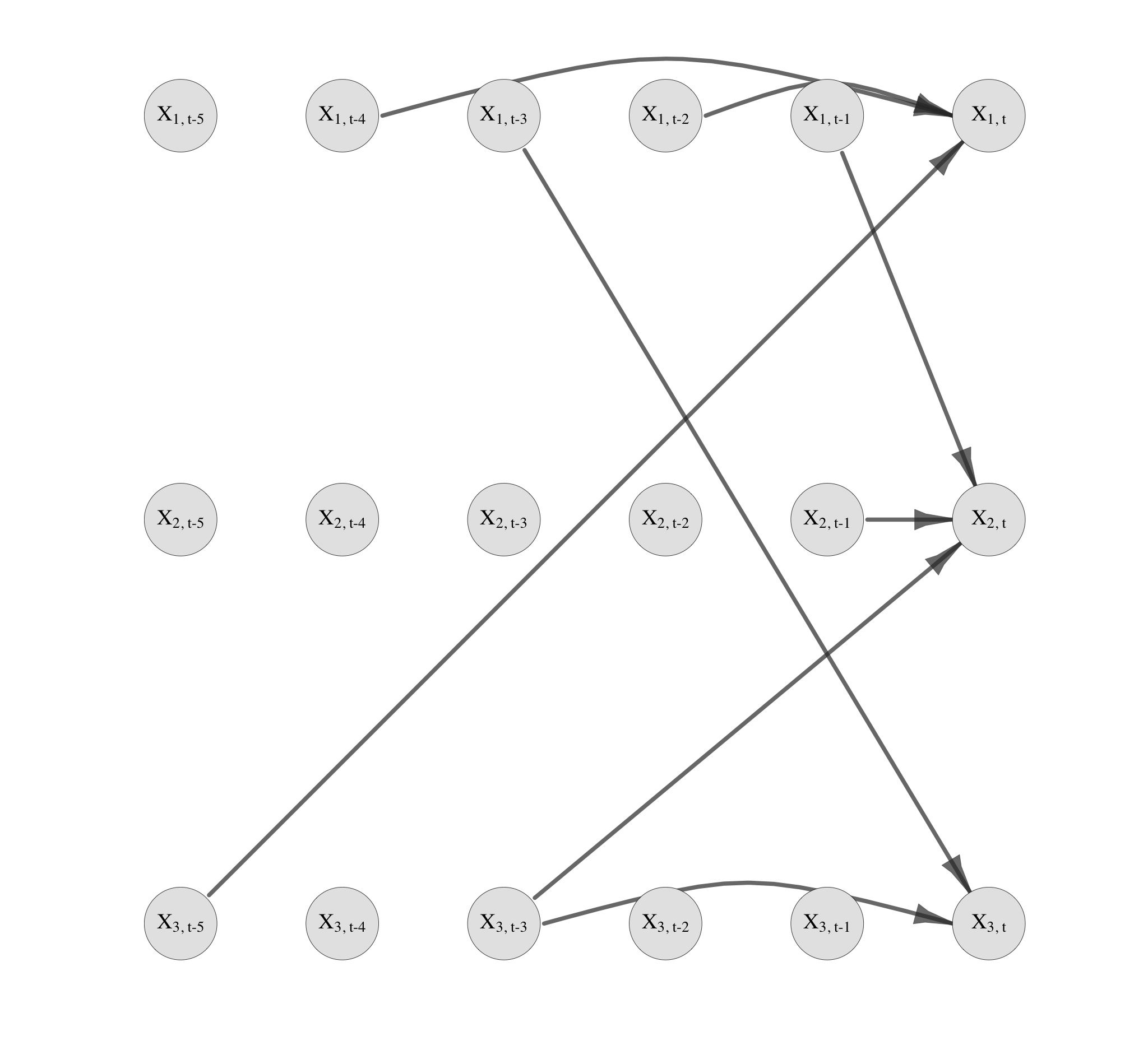}
    \caption{A time-invariant DAG without instantaneous effects which corresponds to the SEM:
   $X_{1,t}=0.8 X_{1,t-2}-0.3X_{1,t-4}+0.5X_{3,t-5}+\epsilon_{1,t}$, $X_{2,t}= -0.7X_{1,t-1}+\sqrt {\vert X_{2,t-1}+X_{3,t-3}\vert}+\epsilon_{2,t}$ and $X_{3,t}= \sin(X_{1,t-3}X_{3,t-3})+\epsilon_{3,t}$. The nodes correspond to  variables $\{X_{c,t};\ c=1,\ldots ,l;\ t \in \Z \}$. Due to stationarity and the
  Markovian structure, it is sufficient to represent the DAG for time points $t,\ldots ,t-p_0$.}
    \label{fig:ex_graph}
\end{figure}

We denote the causal effect
at $X_{c_1,t}$ after an intervention at $X_{c_2,t-s},\, s \in \N$ by
$\E[X_{c_1,t}\,\vert\, \Do(X_{c_2,t-s}$\\$=x)]$, where $c_1,\, c_2$ are
components of the multivariate time series. Also, $t$ is the time index of
the response variable and $s$ the time difference between intervention
and response variable. Let $\mathbf{X}_1,\ldots,\mathbf{X}_n$ be a sample of
the sequence. We estimate the causal effect as  
\begin{equation}\label{eq:mint}
\hat{\E}[X_{c_1,t}\,\vert\, \Do(X_{c_2,t-s}=x)]=(n-s-p)^{-1}\sum_{k=s+p+1}^{n}\hat{m}(x,\mathbf{X}_{k-s}^{\Superset}), 
\end{equation}
where $\mathbf{X}_{t}^{\Superset}$ denotes the set
$\{\mathbf{X}_{t-1},\ldots,\mathbf{X}_{t-p}\}$ with $p \ge p_0$ chosen
  reasonably large such that it is larger than the Markovian order $p_0$.
The partially locally linear estimator $\hat{m}(x,\mathbf{x}^{\Superset})$ is obtained by minimising
\begin{equation} \label{eq:hatm}
\sum_{k=s+p+1}^n (X_{c_1,k}-\alpha-\beta(X_{c_2,k-s}-x))^2 K_{h_1}(X_{c_2,k-s}-x)L_{h_2}(\mathbf{X}_{k-s}^{\Superset}-\mathbf{x}^{\Superset})
\end{equation}
with respect to $\alpha$ and $\beta$. We then use
$\hat{m}(x,\mathbf{x}^{\Superset}) = \hat{\alpha}$. $K$, $L$ are two kernel functions
and $h_1$, $h_2$ are their corresponding bandwidths. When the bandwidths are
chosen appropriately, we are able to show that our estimator recovers the
true causal effect consistently and with convergence rate $n^{-2/5}$. For
this purpose, we require some assumptions outlined next.

\subsection{Asymptotic result}\label{sec:asy_theory}
We assume the following conditions.
\begin{assumption}\label{ass1}\hspace{0.5cm}
	\begin{enumerate}
		\item The mixing coefficients of the underlying strongly
                  mixing stationary $l$-dimensional Markovian process of order $p_0$ satisfy 
		$\alpha_k \leq A k^{-\beta}$ and $\beta > 2+\frac{pl+1}{\gamma}+pl$ for some constants $A>0$, $\gamma>0$ and $p\geq p_0$.
		\item The variables $X_{c,t}$ have bounded support for every $t \in \Z$ and $c \in \{1,\ldots,l\}$.
		\item The variables $X_{c_2,t},\mathbf{X}_{t}^{\Superset}$ have a joint density with respect to Lebesgue measure and $p(u,\mathbf{u}^{\Superset})$ has continuous, bounded partial derivatives up to order 2 with respect to $u$ and up to order $d$ with respect to $\mathbf{u}^{\Superset}$. In addition, for a $\delta>0$ the joint distribution is bounded away from zero in a neighbourhood of $x$ $$\underset{\mathbf{u}^{\Superset} \in \supp(\mathbf{X}_{t}^{\Superset})}{\underset{u \in x\pm \delta}{\inf}} p(u,\mathbf{u}^{\Superset})>0.$$
		\item  The conditional density $p(\mathbf{X}_i^{\Superset}\,\vert\, \mathbf{X}_j,\mathbf{X}_j^{\Superset},\mathbf{X}_k,\mathbf{X}_k^{\Superset},\mathbf{X}_l,\mathbf{X}_l^{\Superset})$ is bounded a.s. for every $i,j,k,l \in \Z$.
		\item For every $j\in \Z$ the joint density $p(\mathbf{X}_{t},\mathbf{X}_{t}^{\Superset},\mathbf{X}_{t+j},\mathbf{X}_{t+j}^{\Superset})$ is bounded.
		\item The regression function $m(u,\mathbf{u}^{\Superset})=\E{[X_{c_1,t}\,\vert\, X_{c_2,t-s}=u, \mathbf{X}_{t-s}^{\Superset}=\mathbf{u}^{\Superset}]}$ exists and has bounded partial derivatives up to order 2 with respect to $u$ and up to order $d$ with respect to $\mathbf{u}^{\Superset}$. Furthermore, 
		$\E{[\vert X_{c_1,t}\vert \, \vert\, X_{c_2,t-s}=u, \mathbf{X}_{t-s}^{\mathcal{S}}=\mathbf{u}^{\Superset}]}$ is bounded.
		\item The kernel functions $K,L$ are symmetric, bounded on a bounded support and $L$ is an order $d$-kernel. 
		\item The product kernel $K\cdot L$ is Lipschitz, i.e., there exists a constant $\Lambda \geq 0$ such that for all $(u,v)$ and $(u',v')$ $$\vert K(u)L(v) -K(u')L(v')\vert \leq \Lambda \Vert(u,v)-(u',v')\Vert\,.$$
		\item The bandwidths are chosen such that $nh_1h_2^{2pl}/log^2(n)\rightarrow \infty$ and $h_1^4 log(n)/ h_2^{pl}\rightarrow 0$, $h_2^d/h_1^2 \rightarrow 0$, $n^{\theta}h_1h_2^{pl}/log(n)\rightarrow \infty$ with $\theta=\frac{\beta-2-pl-(pl+1)/ \gamma}{\beta+2-pl}$.
	\end{enumerate}
\end{assumption}

\hspace{-0.65cm} Assumptions 1.2 - 1.3, 1.6 and the bandwidth conditions in 1.9 are adapted from Assumption~1 in \cite{ernest2015}. Assumptions 1.1, 1.5, 1.7 - 1.9 ensure the uniform convergence of the kernel density estimator for dependent variables. In particular, a mixing rate is defined in Assumption~1.1 which yields the desired rate of convergence of the kernel density estimator.

\begin{theorem}\label{thm1}
	Let $(\mathbf{X}_t)_{t \in \Z}$ be a strictly stationary, strongly
        mixing Markovian process of order $p_0$, and assume that it
          can be uniquely represented in the form of a structural equation
          model with time invariant DAG as in \eqref{SEM} which exhibits no
          instantaneous effects.
Under Assumption~\ref{ass1}, it holds that 
$$\hat{\E}[X_{c_1,t}\,\vert\, \Do(X_{c_2,t-s}=x)]-\E[X_{c_1,t}\,\vert\, \Do(X_{c_2,t-s}=x)]=O(h_1^2)+O_p(1/\sqrt{nh_1}) .$$
\end{theorem}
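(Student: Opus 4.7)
My plan is to split the estimation error into a deterministic bias of order $h_1^2$ and a stochastic fluctuation of order $1/\sqrt{nh_1}$, following the marginal-integration strategy of \citet{fan1998} but carefully adapted to the strongly mixing setting. Writing $N=n-s-p$ and using the backdoor identity $\E[X_{c_1,t}\,\vert\,\Do(X_{c_2,t-s}=x)] = \E[m(x,\mathbf{X}_{t-s}^{\Superset})]$, I start from
\begin{align*}
\hat{\E}[X_{c_1,t}\,\vert\,\Do(X_{c_2,t-s}=x)] - \E[X_{c_1,t}\,\vert\,\Do(X_{c_2,t-s}=x)]
&= \frac{1}{N}\sum_{k=s+p+1}^{n}\bigl(\hat m - m\bigr)(x, \mathbf{X}_{k-s}^{\Superset}) \\
&\quad + \frac{1}{N}\sum_{k=s+p+1}^{n} m(x,\mathbf{X}_{k-s}^{\Superset}) - \E[m(x,\mathbf{X}_{t-s}^{\Superset})].
\end{align*}
The second sum is a centered stationary mixing sum; under Assumptions~1.1--1.2 a standard mixing CLT yields $O_p(n^{-1/2})$, which is absorbed by $O_p((nh_1)^{-1/2})$.

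For the first sum I expand the partially locally linear estimator by solving the weighted normal equations of \eqref{eq:hatm}. A standard local expansion, Taylor of order two in the $u$-direction and using that $L$ is an order-$d$ kernel in the $\mathbf{u}^{\Superset}$-direction, yields
\begin{equation*}
\hat m(x,\mathbf{u}^{\Superset}) - m(x,\mathbf{u}^{\Superset}) = \tfrac{1}{2} h_1^2 \mu_2(K)\, \partial^2_u m(x,\mathbf{u}^{\Superset}) + O(h_2^d) + V_n(x,\mathbf{u}^{\Superset}),
\end{equation*}
where, writing $\eps_k = X_{c_1,k} - m(X_{c_2,k-s},\mathbf{X}_{k-s}^{\Superset})$ and $\hat f$ for the associated kernel density estimator,
\begin{equation*}
V_n(x,\mathbf{u}^{\Superset}) = \frac{1}{N \hat f(x,\mathbf{u}^{\Superset})} \sum_{k=s+p+1}^{n} K_{h_1}(X_{c_2,k-s}-x)\, L_{h_2}(\mathbf{X}_{k-s}^{\Superset}-\mathbf{u}^{\Superset})\,\eps_k.
\end{equation*}
Averaging the deterministic parts over $k$ gives $\tfrac{1}{2}h_1^2\mu_2(K)\,\E[\partial^2_u m(x,\mathbf{X}_{t-s}^{\Superset})] + O(h_2^d) = O(h_1^2)$, since $h_2^d/h_1^2\to 0$ by Assumption~1.9. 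Uniform validity of the expansion over $\mathbf{u}^{\Superset}$ in a neighbourhood of $x$ requires $\hat f$ to stay bounded away from zero, which follows from Assumption~1.3 and the uniform kernel-density rate governed by the exponent $\theta$ in Assumption~1.9; Assumptions~1.7--1.8 provide the standard kernel regularity that this argument needs.

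The crux is then the stochastic piece $\bar V_n := N^{-1}\sum_k V_n(x,\mathbf{X}_{k-s}^{\Superset})$. Here one exploits the Fan--H\"ardle--Mammen variance-reduction mechanism: in the double sum $N^{-2}\sum_{j,k}$ appearing in $\Var(\bar V_n)$, the inner average in $j$ of $L_{h_2}(\mathbf{X}_{k-s}^{\Superset} - \mathbf{X}_{j-s}^{\Superset})$ approximates $p(\mathbf{X}_{k-s}^{\Superset})$ up to $O(h_2^d)$, so the $L_{h_2}$-factor is effectively integrated out and only the one-dimensional $K_{h_1}$-factor survives. The effective variance of $\bar V_n$ thus scales as $(nh_1)^{-1}$ rather than $(nh_1h_2^{pl})^{-1}$, producing the $O_p((nh_1)^{-1/2})$ bound.

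The main obstacle will be making this variance-reduction step rigorous under strong mixing rather than i.i.d.\ sampling. The double sum must be split into a diagonal band with $|j-k|$ small, where boundedness of the joint density of $(\mathbf{X}_t,\mathbf{X}_t^{\Superset},\mathbf{X}_{t+j},\mathbf{X}_{t+j}^{\Superset})$ in Assumption~1.5 and the conditional density bound in Assumption~1.4 control the integrand, and an off-diagonal tail with $|j-k|$ large, bounded via a Davydov-type covariance inequality together with the polynomial mixing rate $\beta > 2+(pl+1)/\gamma+pl$ from Assumption~1.1. The bookkeeping is similar in spirit to \citet{masry1995,masry1997} but must be redone for the partially-locally-linear estimator and the subsequent marginal integration; once these covariance bounds are in place, collecting the bias and variance contributions yields the stated $O(h_1^2) + O_p(1/\sqrt{nh_1})$ rate.
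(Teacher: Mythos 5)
Your proposal is correct in outline and follows essentially the same route as the paper: both adapt the Fan--H\"ardle--Mammen marginal-integration analysis, obtaining the $O(h_1^2)$ bias from a second-order Taylor expansion with the order-$d$ kernel condition $h_2^d/h_1^2\to 0$, using Hansen-type uniform kernel-density rates (via the exponent $\theta$) and Assumption~1.3 to control the denominator, and establishing the $(nh_1)^{-1}$ variance of the integrated stochastic term by moment calculations that split the double sums into a near-diagonal band (controlled by the bounded joint and conditional densities of Assumptions~1.4--1.5) and an off-diagonal part bounded by a Davydov covariance inequality under the polynomial mixing rate. The paper organizes this through the explicit terms $T_{n,1},\ldots,T_{n,4}$ of Fan--H\"ardle--Mammen's equation (6.5) and a case analysis over two, three and four distinct indices, but the underlying decomposition and estimates coincide with yours.
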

\begin{remark} \label{remark:rate}
	The rate $O(n^{-2/5})$ can be obtained by choosing $h_1 \asymp
        n^{-1/5}$ if $pl<d$. This corresponds to the optimal rate of
        convergence for the estimation of one-dimensional twice
        differentiable functions.
\end{remark}

\begin{remark} \label{remark:transf}
\begin{sloppypar}
	Theorem~\ref{thm1} extends to the estimation of $\E[g(X_{c_1,t})\,\vert\, \Do(X_{c_2,t-s}=x)]$ for arbitrary
        real-valued transformations $g(\cdot)$ (see Remark 1
        \cite{ernest2015}). For example, this enables the estimation of
        $\E[X_{c_1,t}^2\,\vert\, \Do(X_{c_2,t-s}=x)]$, $\E[\vert X_{c_1,t}\vert\,\vert\, \Do(X_{c_2,t-s}=x)]$ or $P(X_{c_1,t}
        \leq b\,\vert\, \Do(X_{c_2,t-s}=x))$.
This is especially useful in the analysis of financial time series. Financial returns often show no evidence of serial correlation, whereas the absolute returns or the squared returns do.
\end{sloppypar}
\end{remark}

We defer the proof of Theorem \ref{thm1} to \ref{proofthm1}. The proof is a non-trivial extension of the
    techniques in \citet{fan1998} to the case of stationary
    Markovian processes. Alternatively, Theorem \ref{thm1} might be
    derived by generalizing the theory of projections in
    \citet{masry1997} to a fully nonparametric model with more than two
    projection components. 

Furthermore, \citet{hengartner2005} propose an interesting modification
to the marginal integration estimator presented here. They
  suggest an internally normalised pilot estimator for the conditional
  mean which leads to nicer asymptotic results. Related to their
    motivation, we will discuss ways to ease the bandwidth choice and
  to bypass the
  use of higher order kernels for MINT-T in practice in the upcoming Sections
  \ref{section:implementation} and \ref{section:choiceh}. 
  
\subsection{Instantaneous effects} \label{sec:instant}

Up till now, we assumed that there are no instantaneous effects between the
different components of the time series. Here, we will argue that
 some statements are still possible in presence of instantaneous effects
while requiring no knowledge of the underlying DAG in the structural equation model (SEM). Instead of (6) we consider a SEM with DAG $D$ 
\begin{eqnarray}\label{instSEM}
& &X_{c,t} = f_{c}(X_{\pa(c,t)},\epsilon_{c,t}),\ c=1,\ldots,l,\ t \in \Z,\nonumber\\
& &\epsilon_{1,t},\ldots ,\epsilon_{l,t}\ \mbox{jointly independent for all}\ t \in \Z,
\end{eqnarray}
where $\pa(c,t)$ is also allowed to include indices $(d,t)$ for some $d \neq c$, i.e., so-called instantaneous effects (in Figure 1, there would be some directed edges among the variables $X_{1,t},\ldots, X_{l,t}$); and $\pa(c,t)$ is defined with respect to the DAG $D$. 

To deal with instantaneous effects, we have to choose another adjustment
set. Ideally, when doing an intervention $\Do(X_{c_2,t-s} = x)$ we would
choose an adjustment set like $\pa(c_2,t-s)$ or a slightly larger set
containing only ancestors of $(c_2,t-s)$. In absence of knowing the true
underlying DAG, we cannot do this. We propose the following adjustment set
\begin{eqnarray}\label{adj-larger}
X_{c_2,t-s}^{\tilde{\Superset}}:=\cup_{c \neq c_2}\{X_{c,t-s}\} \cup
  \{\mathbf{X}_{t-s-1},\ldots ,\mathbf{X}_{t-s-p}\}
\end{eqnarray}
where $p \geq p_0$. That is, we also include all instantaneous variables
$X_{c,t-s}\ (c \neq c_2)$. The estimator is the same as in~\eqref{eq:mint} but now
using the larger adjustment set in \eqref{adj-larger}. Using the theory as
presented in Section \ref{sec:asy_theory}, the estimator will converge, with rate $n^{-2/5}$,
to 
\begin{eqnarray*}
\tilde{\E}[X_{c_1,t}\,\vert\,\Do(X_{c_2,t-s}=x)] := \int
  \E[X_{c_1,t}\,\vert\,X_{c_2,t-s}=x,X_{c_2,t-s}^{\tilde{\Superset}}]
  dP(X_{c_2,t-s}^{\tilde{\Superset}}). 
\end{eqnarray*}
In general, it will happen that 
\begin{eqnarray*}
\tilde{\E}[X_{c_1,t}\,\vert\,\Do(X_{c_2,t-s}=x)] \neq
  \E[X_{c_1,t}\,\vert\,\Do(X_{c_2,t-s}=x)]. 
\end{eqnarray*}
What we will argue though is that when
$\tilde{\E}[X_{c_1,t}\,\vert\,\Do(X_{c_2,t-s}=x)] \neq \E[X_{c_1,t}]$ (and
its estimate being sufficiently far away from the mean), one can claim a total
causal effect of $X_{c_2,t-s}$ on $X_{c_1,t}$.
For a rigorous statement, we need the following definition. 
\begin{definition}
$X_{c_2,t-s}$ is not total causal for $X_{c_1,t}$ if and only if $X_{c_2,t-s}  \perp X_{c_1,t}$ under the interventional distribution $P^{\mathbf{X}_t\vert \Do(X_{c_2,t-s}=x)}$ for all $x \in \supp(X_{c_2,t-s})$. Here, "$\perp$" denotes independence.
\end{definition}

Apparently, if $X_{c_2,t-s}$ is not total causal for $X_{c_1,t}$, then the
average causal effect $\E[X_{c_1,t}\,\vert\,\Do(X_{c_2,t-s}$\\$=x)]\equiv
\E[X_{c_1,t}]$. We will show next that the same holds true for
$\tilde{\E}[X_{c_1,t}\,\vert\,\Do(X_{c_2,t-s}=x)]$.  

\begin{theorem}\label{th2}
Let $(\mathbf{X}_t)_{t \in \Z}$ be a strictly stationary Markovian process
of order $p_0$, and assume that it is 
        represented in the form of a structural equation model with time
        invariant DAG $D$ as in \eqref{instSEM} allowing also for instantaneous
        effects. Assume that, for all $x \in
        \supp(X_{c_2,t-s})$, the interventional distribution
        $P^{\mathbf{X}_t\vert \Do(X_{c_2,t-s}=x)}$ after the intervention
        $\Do(X_{c_2,t-s}=x)$ is faithful with respect to the interventional
        DAG $D_{\mathrm{interv}}$ where all directed arrows into
        $(c_2,t-s)$  are deleted. Then from
       $\tilde{\E}[X_{c_1,t}\,\vert\,\Do(X_{c_2,t-s}=x)] \neq
          \E[X_{c_1,t}]$ for some $x \in \supp(X_{c_2,t})$, it follows that $X_{c_2,t-s}$ is total causal for $X_{c_1,t}$ for $s>0$. 
\end{theorem}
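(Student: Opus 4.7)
The plan is to prove the contrapositive: if $X_{c_2,t-s}$ is not total causal for $X_{c_1,t}$, then $\tilde{\E}[X_{c_1,t}\,\vert\,\Do(X_{c_2,t-s}=x)] = \E[X_{c_1,t}]$ for every $x \in \supp(X_{c_2,t-s})$. The first step is to translate the non-causality condition into a graphical statement. Using the interventional faithfulness hypothesis together with the fact that $(c_2,t-s)$ has no incoming edges in $D_{\mathrm{interv}}$, the absence of a total causal effect is equivalent to the absence of a directed path from $(c_2,t-s)$ to $(c_1,t)$ in $D_{\mathrm{interv}}$. Since $D$ and $D_{\mathrm{interv}}$ share the same outgoing edges from $(c_2,t-s)$, this is the same as the absence of such a directed path in $D$.

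Write $Z := X_{c_2,t-s}^{\tilde{\Superset}}$. If I can show that $Z$ d-separates $(c_2,t-s)$ from $(c_1,t)$ in $D$, the global Markov property of the observational law $P$ with respect to $D$ yields $X_{c_1,t}\perp X_{c_2,t-s}\,\vert\,Z$. Then $\E[X_{c_1,t}\,\vert\,X_{c_2,t-s}=x, Z]$ does not depend on $x$, and marginalising over the observational law of $Z$ gives $\tilde{\E}[X_{c_1,t}\,\vert\,\Do(X_{c_2,t-s}=x)] = \E[\E[X_{c_1,t}\,\vert\,Z]] = \E[X_{c_1,t}]$. Thus the task reduces to the d-separation claim.

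The core step is a case analysis on an arbitrary path $\pi$ from $(c_2,t-s)$ to $(c_1,t)$ in $D$. If the first edge of $\pi$ is incoming, i.e.\ $(c_2,t-s)\leftarrow v_1$, then $v_1$ is a parent of $(c_2,t-s)$; since $p\geq p_0$, all such parents lie in $Z$, and $v_1$ is a non-collider on $\pi$, which blocks it. If the first edge is outgoing, $\pi$ cannot be entirely directed, so let $v_{k^*}$ be the first collider reached from $(c_2,t-s)$ along a directed initial segment. Then $v_{k^*}$ is a descendant of $(c_2,t-s)$, so its time coordinate is $\geq t-s$. If its time coordinate is strictly greater than $t-s$, neither $v_{k^*}$ nor any of its descendants belongs to $Z$, whose elements all have time coordinate $\leq t-s$, so the collider blocks $\pi$. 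If its time coordinate equals $t-s$, then $v_{k^*}\in Z$ (it must differ from $(c_2,t-s)$ by acyclicity and simplicity of $\pi$), so the collider opens, but the very next vertex $v_{k^*+1}$ is a parent of $v_{k^*}$; its time coordinate lies in $\{t-s-p_0,\ldots,t-s\}\subseteq\{t-s-p,\ldots,t-s\}$ and it is not $(c_2,t-s)$, so $v_{k^*+1}\in Z$. Since the edge from $v_{k^*+1}$ to $v_{k^*}$ is outgoing from $v_{k^*+1}$, this vertex is a non-collider on $\pi$, so $\pi$ is blocked.

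The main obstacle is precisely this last subcase: $Z$ contains the instantaneous neighbours of $(c_2,t-s)$, some of which may be descendants via instantaneous edges, and opening a collider among them could in principle propagate dependence toward $(c_1,t)$. The argument succeeds because whenever such a collider at time $t-s$ is opened, its other path-neighbour must again lie in $Z$ (this is where the Markovian order bound $p\geq p_0$ together with the inclusion of all other components at time $t-s$ is essential) and appears there as a non-collider that blocks the path. The remaining ingredients---the contrapositive formulation and the invocation of the observational Markov property---are routine once the graphical claim is in hand.
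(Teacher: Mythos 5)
Your proof is correct and follows essentially the same route as the paper's: contrapositive, faithfulness of the interventional law to $D_{\mathrm{interv}}$ to exclude collider-free (hence directed) paths, a path-by-path case analysis showing the adjustment set $X_{c_2,t-s}^{\tilde{\Superset}}$ blocks every path (colliders strictly later than $t-s$ remain closed, while a collider at time $t-s$ forces a parent in the adjustment set that acts as a blocking non-collider), and finally the Markov property together with the tower rule. If anything, your write-up is slightly more careful than the paper's, since you transfer the graphical condition to $D$ and explicitly treat paths leaving $(c_2,t-s)$ through a parent, and you justify why the relevant (first) collider must have time coordinate at least $t-s$.
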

\begin{remark} The distribution
of the stochastic process does not necessarily define a unique DAG. We note that the statement is true for any DAG $D$ such that the faithfulness
and the Markov property hold.
\end{remark}
\begin{remark} If we know beforehand that we are only interested in a set
  of prespecified intervention values $\mathcal{I}\subset
  \supp(X_{c_2,t-s})$, then the above assumption  "$P^{\mathbf{X}_t\vert
    \Do(X_{c_2,t-s}=x)}$ is faithful with respect to $D_{\mathrm{interv}}$
  for all $x \in 
        \supp(X_{c_2,t-s})$" can be weakened to require only
"for all $x \in \mathcal{I}$". The statement of the theorem then
  reads: from
       $\tilde{\E}[X_{c_1,t}\,\vert\,\Do(X_{c_2,t-s}=x)] \neq
          \E[X_{c_1,t}]$ for some $x \in \mathcal{I}$, it follows that 
$X_{c_2,t-s}$ is total causal for $X_{c_1,t}$ for $s>0$.
\end{remark} 
 
Just as in Theorem \ref{thm1}, we are able to estimate
$\tilde{\E}[X_{c_1,t}\,\vert\,\Do(X_{c_2,t-s}=x)]$ 
with the optimal rate by adjusting on $X_{c_2,t-s}^{\tilde{\Superset}}$
from \eqref{adj-larger} in the presence of instantaneous effects. Thus,
  if $\widehat{\tilde{\E}}[X_{c_1,t}\,\vert\,\Do(X_{c_2,t-s}=x)] \neq
    \overline{X}_{c_1,.}$,
we would
  claim a total causal effect and hence avoid false positive statements
  about causal effects. A word of caution should be added, however:  we
  would typically rank different causal effects by a quantity like
\begin{eqnarray*}
\mathrm{eff}\left(X_{c_2,t-s} \to X_{c_1,t}\right) =  \int
  \|\E[X_{c_1,t}\,\vert\,\Do(X_{c_2,t-s}=x)] - \E[X_{c_1,t}]\| dw(x) 
\end{eqnarray*}
for some weight function $w(\cdot)$. This causal effect quantity can be
rather different from 
\begin{eqnarray*}
\widetilde{\mathrm{eff}}\left(X_{c_2,t-s} \to X_{c_1,t}\right) =  \int
  \|\tilde{\E}[X_{c_1,t}\,\vert\,\Do(X_{c_2,t-s}=x)] - \E[X_{c_1,t}]\| dw(x). 
\end{eqnarray*}
Thus, a ranking of total causal effects by estimates of
$\widetilde{\mathrm{eff}}(\cdot)$ can be rather different than by estimates
of the true causal effects $\mathrm{eff}(\cdot)$.


\begin{proof}[Proof of Theorem \ref{th2}]
To  simplify notation and without loss of generality assume that $\E[X_{c,t}]=0$ for all $c$.
We will show the reverse implication: $X_{c_2,t-s}$ is not total causal for $X_{c_1,t} \Rightarrow \tilde{\E}[X_{c_1,t}\,\vert\,\Do(X_{c_2,t-s}$\\$=x)]\equiv 0$. If $X_{c_2,t-s}$ is not total causal for $X_{c_1,t}$, $X_{c_1,t}$ and $X_{c_2,t-s}$ are independent under the interventional distribution
$P^{\mathbf{X}_t\vert \Do(X_{c_2,t-s}=x)}$ for all $x \in \supp(X_{c_2,t-s}) $.
On a graphical level, the intervention $\Do(X_{c_2,t-s}=x)$ corresponds to deleting all
incoming edges into $(c_2,t-s)$ from the observational DAG $D$ and substituting $X_{c_2,t-s}$ by $x$ in
the resulting interventional DAG $D_{\mathrm{interv}}$.
The independence between $X_{c_1,t}$ and $X_{c_2,t-s}$ can
be translated into a graphical criterion, namely that $X_{c_1,t}$ and
$X_{c_2,t-s}$ are d-seperated by the empty set in $D_{\mathrm{interv}}$ to which $P^{\mathbf{X}_t\vert \Do(X_{c_2,t-s}=x)}$ is faithful for all $x$. Then, there can only be paths between $X_{c_2,t-s}$ and
$X_{c_1,t}$ that contain at least one v-structure. We denote such a v-structure by
$X_{c_{\mathrm{source}_1},t-j_1}\rightarrow
X_{c_{\mathrm{collider}},t-j_2}\leftarrow
X_{c_{\mathrm{source_2}},t-j_3}$. 
Here, $c_{\mathrm{source}_1}$ or $c_{\mathrm{source}_2}$ could be equal to $c_{\mathrm{collider}}$ but not both.

If the collider
$X_{c_{\mathrm{collider}},t-j_2}$ lies between $j_2=0,..,s-1$, nothing
changes by adjusting on $X_{c_2,t-s}^{\tilde{\Superset}}$. The path still
remains blocked  by $X_{c_{\mathrm{collider}},t-j_2}$. On the other hand,
if $j_2=s$, then  $j_1, j_3\geq s$ and the
path is blocked by either $X_{c_{\mathrm{source}_1},t-j_1}$ or
$X_{c_{\mathrm{source}_2},t-j_3}$ after adjusting on
$X_{c_2,t-s}^{\tilde{\Superset}}$. Therefore, every path between
$X_{c_1,t}$ and $X_{c_2,t-s}$ is blocked by
$X_{c_2,t-s}^{\tilde{\Superset}}$ in $D_{\mathrm{interv}}$. 

Since $P$ is Markov w.r.t. $D$, we have conditional independence of
  $X_{c_1,t}$ and $X_{c_2,t-s}$ given $X_{c_2,t-s}^{\tilde{\Superset}}$ and therefore, 
\begin{eqnarray*}
& &\tilde{\E}[X_{c_1,t}\,\vert\,\Do(X_{c_2,t-s}=x)]=
\int \E[X_{c_1,t}\,\vert\,X_{c_2,t-s}=x,X_{c_2,t-s}^{\tilde{\Superset}}]
  dP(X_{c_2,t-s}^{\tilde{\Superset}})\\
&=&\int \E[X_{c_1,t}\,\vert\,X_{c_2,t-s}^{\tilde{\Superset}}]
  dP(X_{c_2,t-s}^{\tilde{\Superset}}) = \E[X_{c_1,t}] = 0.
\end{eqnarray*}

\end{proof}

\section{Implementation}\label{section:implementation}

The estimator in \eqref{eq:mint} is constructed using the partially locally
linear estimator in \eqref{eq:hatm}. This requires choosing two bandwidths
$h_1$ and $h_2$, and such a choice is not easy in view of the fact that we
cannot rely on cross-validation for the quantity $\E[X_{c_1,t}\,\vert\,
\Do(X_{c_2,t-s}=x)]$ (since there is no corresponding loss to an
observable quantity). The related estimator suggested by
\cite{ernest2015}, based on a boosting idea, seems substantially easier for
practical purposes.  

\subsection{MINT-T: an implementation for marginal integration}

We describe here our estimation
  scheme based on the boosting idea in \cite{ernest2015}: we call it
  "\mbox{MINT-T}'', standing for marginal integration in time series. \mbox{MINT-T} differs from the
  estimation scheme in \cite{ernest2015} in that no additive approximation
  is used in the first step. The more regular structure of our setting
  (e.g., stationarity assumption, same size of adjustment set for all
  interventions) allows us to directly apply the marginal integration estimator
  from the first step. 

Exploiting the strict stationarity of the time series, we obtain $n-s-p$
dependent samples (i.e., samples of "regressors''). For $X_{c_1,t}$ and
$X_{c_2,t-s}$, we have the "response'' vector  $\mathbb{X}_{c_1} := \{X_{c_1,n},
X_{c_1,n-1},\ldots,X_{c_1,s+p+1}\}$ and the "regressor'' $\mathbb{X}_{c_2} :=
\{X_{c_2,n-s}, X_{c_2,n-s-1}, \ldots, X_{c_2,p+1}\}$, respectively. For the adjustment set $\textbf{X}^\mathcal{S}_{t-s}$, we have the lagged values of
  the "regressors'' which can be represented by the matrix
  $(\textbf{X}_{n-s-1}(p),\ldots ,\textbf{X}_{p}(p))$, where $\textbf{X}_{t}(p) = (\textbf{X}_t,\ldots
  ,\textbf{X}_{t-p+1})$. This matrix involves the samples
  $\mathbb{X}^{\mathcal{S}}:=\{\textbf{X}_{n-s-1}, \ldots ,
  \textbf{X}_1\}$.

The initial step of \mbox{MINT-T} consists of approximating the
regression function $m(x, \textbf{x}^{\mathcal{S}}) =
\E[X_{c_1,t} \mid X_{c_2,t-s} = x, \textbf{X}^{\mathcal{S}}_{t-s} =
\textbf{x}^{\mathcal{S}}]$ by a locally constant estimator of the form
\begin{equation} \label{eq:locconst}
	\hat{m}_{\text{init}}(x, \textbf{x}^{\mathcal{S}}) := \argminn\limits_{\alpha} \sum\limits_{k=s+p+1}^n (X_{c_1,k} - \alpha)^2 K_{h_1}(X_{c_2, k-s} - x) L_{h_2}(\textbf{X}^{\mathcal{S}}_{k-s} - \textbf{x}^{\mathcal{S}}) .
\end{equation}

Marginally integrating the estimator~\eqref{eq:locconst} over the samples $\mathbb{X}^{\mathcal{S}}$ with the empirical mean as in equation~\eqref{eq:hatm} yields an estimate for the true causal effect $\E[X_{c_1,t}\,\vert\,\Do(X_{c_2,t-s}=x)]$. 
The problem is that the marginally integrated estimator~\eqref{eq:hatm} is very sensitive to the choice of the bandwidths $h_1$ and $h_2$. Moreover, we cannot use cross-validation or penalised likelihood techniques to determine the optimal bandwidths as $\E[X_{c_1,t}\,\vert\,\Do(X_{c_2,t-s}=x)]$ is neither a regression function nor does it appear in the likelihood. To make our estimator more robust with respect to the choice of the bandwidths, we therefore apply $B$ steps of $L_2$-boosting with the locally constant estimator~\eqref{eq:locconst}, which, in every iteration, is applied to the residuals of the previous fit. The key idea of the boosting procedure is that the bandwidths $h_1$ and $h_2$ in~\eqref{eq:locconst} can be set to large values in order to obtain an estimator with high bias and low variance. The boosting iterations then reduce the bias. As such, the boosted estimator is less sensitive to the specific choice of the bandwidths as long as they are sufficiently large. This will be shown experimentally in Section~\ref{section:choiceh}.
 The effect of the boosting can be compared to the one of the use of a higher-order kernel~\citep{dimarzio2008}.

We now describe the boosting procedure in detail. Let
$\hat{m}_1 := \hat{m}_{\text{init}}$ defined in~\eqref{eq:locconst}. Then, the $n-s-p$ residuals
$R_{1,s+p+1},\ldots,R_{1,n}$ of the initial model fit are given as 
 $$
 	R_{1,k} = X_{c_1,k} - \hat{m}_{1}(X_{c_2, k-s}, \textbf{X}^{\mathcal{S}}_{k-s}), \qquad k=s+p+1,\ldots,n.
 $$ 
 The locally constant fit of the residuals is then obtained as in~\eqref{eq:locconst} by minimising
 \begin{equation} \label{eq:mint_kernel}
 	\sum\limits_{k=s+p+1}^n (R_{1,k} - \alpha)^2 K_{h_1}(X_{c_2, k-s} - x) L_{h_2}(\textbf{X}^{\mathcal{S}}_{k-s} - \textbf{x}^{\mathcal{S}})
 \end{equation}
 with respect to $\alpha$, and is denoted by $\hat{g}_{R_1}(x, \textbf{x}^{\mathcal{S}}):= \hat{\alpha}$.
 Let $\hat{\textbf{m}}$ be the $(n-s-p)$-dimensional vector of $\hat{m}$ evaluated at the samples of the time series and $\textbf{X}_{c_1}$ be the $(n-s-p)$-dimensional vector of the samples in $\mathbb{X}_{c_1}$. We can then summarise the $L_2$-boosting step as follows: for $b=1,\ldots,B-1$,
\begin{align*}
	\hat{m}_{b+1} & = \hat{m}_b + \hat{g}_{R_{b}}, \\
	\textbf{R}_{b+1} & = \textbf{X}_{c_1} - \hat{\textbf{m}}_{b+1},
\end{align*}
where $B$ (the number of boosting iterations) is a regularisation parameter. 

Finally, we marginally integrate over the samples
$\mathbb{X}^{\mathcal{S}}$ with the empirical mean. This last step of \mbox{MINT-T}
yields the final estimate  
$$
	\hat{\E}[X_{c_1,t} \mid \Do(X_{c_2,t-s} = x)] = (n-s-p)^{-1} \sum\limits_{k=s+p+1}^{n} \hat{m}_B (x, \textbf{X}_{k-s}^{\mathcal{S}}) .
$$
The pseudo-code summarising our method is provided in Algorithm~\ref{alg:Mint}.

\begin{algorithm}[!htb]
\begin{algorithmic}[1]
\vspace{0.4cm}
\STATE Construct $n-s-p$ samples of the adjustment set, the intervention variable and the target variable exploiting the strict stationarity of the time series.
\STATE Fit an initial locally constant estimator of $X_{c_1,t}$ versus $X_{c_2,t}$ and $\textbf{X}^{\mathcal{S}}_{t-s}$ with large bandwidths to obtain $\hat{m}_{1} := \hat{m}_{\text{init}}$ in~\eqref{eq:locconst}.
\FOR{$b=1,\ldots,B-1$}
\STATE Apply one step of $L_2$-boosting as follows:  
\STATE (i)  \hspace{0.2cm} Compute residuals $\textbf{R}_{b} = \textbf{X}_{c_1} - \hat{\textbf{m}}_{b}$
\STATE (ii) \hspace{0.14cm} Fit the residuals with the kernel estimator \eqref{eq:mint_kernel} to obtain $\hat{g}_{R_b}$  
\STATE (iii) \hspace{0.03cm} Set $\hat{m}_{b+1} = \hat{m}_b + \hat{g}_{R_b}$
\ENDFOR
\RETURN Do marginal integration: output $(n-s-p)^{-1} \sum\limits_{k=s+p+1}^n
  \hat{m}_{B}(x,\textbf{X}^{\mathcal{S}}_{k-s})$
\end{algorithmic}
\caption{MINT-T}\label{alg:Mint}
\end{algorithm}

\section{Empirical results}\label{section:empirical}

We provide here empirical results of the marginal integration
  method \mbox{MINT-T} for the estimation of causal effects. We also compare it to
  a reference method, explained below, which
  relies on approximating the data-generating stochastic process. This is
  of course a very ambitious task and, in its full generality, exposed to the curse of
  dimensionality. 
  
\subsection{A reference method}
For comparison, we consider a reference method where we assume that the
time series has an additive functional form with an additive Gaussian error
term. This 
  assumption may easily fail though and thus, the method is exposed to
  model misspecification. We then estimate the value of
each of the $l$ components of $\textbf{X}_t$ by an additive function of the
$p$ previous values of all components, that is, 
\begin{align}
	X_{1, t} & = \hat{\mu}^{(1)} + \sum\limits_{c=1}^l \sum\limits_{j=1}^p \hat{m}^{(1)}_{c, t-j}(X_{c, t-j}) + \hat{\epsilon}_{1,t} \nonumber \\
	& \ \  \vdots \label{eq:referenceMethod} \\
	X_{l, t} & = \hat{\mu}^{(l)} + \sum\limits_{c=1}^l \sum\limits_{j=1}^p \hat{m}^{(l)}_{c, t-j}(X_{c, t-j}) + \hat{\epsilon}_{l,t} \nonumber .
\end{align}
We now set $X_{1,j} = X_{2,j} =,\ldots,= X_{l,j} = 0$ for $j=1,\ldots,p$ and then
iteratively simulate the subsequent values at time points $p+1,p+2,\ldots,n$
of the time series using the estimated functions and estimated error terms
from the additive model~\eqref{eq:referenceMethod} with one exception: when
reaching time point $n-s$, we intervene on component $c_2$ by setting
$X_{c_2, n-s}$ to the value $x$. In the end, we record the simulated value
$\hat{X}_{c_1,n}$. We repeat the whole procedure $N$ times to obtain $N$
simulated realizations $\{\hat{X}_{c_1,n}^{(1)},\ldots,\hat{X}_{c_1,n}^{(N)}\}$. For
sufficiently large $N$, the total causal effect at  $X_{c_1,t}$ after an
intervention at $X_{c_2, t-s}$ can be estimated as 
$$
	\hat{\E}[X_{c_1,t} \mid \Do(X_{c_2, t-s} = x)] = N^{-1} \sum\limits_{i=1}^N \hat{X}_{c_1,n}^{(i)} .
$$

\subsubsection{Approximating the true causal effect}

If the functional form of the true underlying time series and the
distributions of the error terms are known, we can use the reference method
for computing the true causal effect. We then simply replace the estimated
functions and noise variables in the additive
model~\eqref{eq:referenceMethod} by the (not necessarily additive) true
ones, but apart from that stick to the simulation procedure described
above.

\subsection{Simulations}\label{section:simulations}

We examine here \mbox{MINT-T} on simulated time series from a variety of models
covering linear to nonlinear, additive to non-additive, and univariate to
multivariate models: 
\begin{itemize}
\item Model 1: \hspace{0.5cm}$X_t=0.4X_{t-2}-0.6X_{t-6}+0.3X_{t-10}+\epsilon_t$
\item Model 2: \hspace{0.5cm}$X_t=\cos(X_{t-1}+X_{t-4})+\log(\vert X_{t-6}-X_{t-10}\vert+1)+\epsilon_t$
\item Model 3: \hspace{0.5cm}$X_t=\sigma_t\epsilon_t$ with $\sigma_t^2=0.1+0.4X_{t-1}^2+0.2X_{t-4}^2$
\item Model 4: \hspace{0.5cm}$X_t=\sigma_t\epsilon_t$ with $\sigma_t^2=0.2+0.6X_{t-1}^2+0.3\sigma_{t-1}^2$
\item Model 5: \hspace{0.5cm}$X_t=0.4X_{t-1}-0.2X_{t-2}+0.3X_{t-3}+0.8\epsilon_{t-1}+\epsilon_t$
\item Model 6: \hspace{0.5cm}$ \left( \begin{array}{c}
X_{1,t} \\
X_{2,t} \\
X_{3,t} \\
 X_{4,t} \end{array} \right)= \left( \begin{array}{l}
0.4 X_{1,t-1}-0.2X_{1,t-2}+0.3X_{2,t-3}+\epsilon_{1,t}\\
 \cos(X_{1,t-1})+\log(\vert X_{2,t-2}\vert+1)+\epsilon_{2,t}\\
  \sin(X_{3,t-1}-X_{2,t-1})+\sqrt{\vert X_{2,t-3}+X_{4,t-1}\vert}+\epsilon_{3,t}\\
  \cos(X_{2,t-1}-X_{3,t-4})+\log(\vert X_{1,t-6}+X_{2,t-10}\vert+1)+\epsilon_{4,t}
\end{array} \right) $
\end{itemize}
The first model is a linear $\AR(10)$ and model 2 a nonlinear, non-additive
$\AR(10)$-model. The third model corresponds to an $\ARCH(4)$-model and the
fourth model to a $\GARCH(1,1)$-model. The fifth is an $\ARMA(3,2)$-model
and the last a multivariate time series model with four (additive and
non-additive) components. In our simulation study we choose i.i.d. Gaussian
noise with mean zero and variance 1 for models 1-3, 6 and variance 0.5 for
models 4 and 5. In all our numerical experiments, we choose sample size $n =
1000$. 

For each model, we inspect the mean squared error (MSE) between
the true and the estimated causal effect. More precisely, the MSE consists
of the true effect subtracted from the estimated causal effect averaged over
20 cause and effect pairs:  
$$\frac{1}{20}\sum_{j=1}^{20}\sum_{i=1}^9
(\hat{\mathbb{E}}[X_{t}\vert
\Do(X_{t-j}=d_i)]-\mathbb{E}[X_{t}\vert
\Do(X_{t-j}=d_i)])^2.$$

We employ the nine deciles $d_1,\ldots,d_9$ of
the simulated time series (quantiles corresponding to the probabilities
$0.1, 0.2,\ldots,0.9$) as the intervention values. For the
multivariate model 6, we sample the
components of the intervention and response variable uniformly, repeat the calculation over 5 repetitions and average over the resulting MSE values. 
Our method \mbox{MINT-T} requires the following
tuning parameters: the bandwidths $h_1$ and $h_2$, the time lag $p$ and the number of
boosting iterations $B$.  We use the true time lag $p$ whenever it is
known. This is mainly for comparison reason so that the reference method is
not disadvantaged. For the non-Markovian models 4 and 5, we set $p$ to
10. The bandwidth $h_1$ is set to $2\hat{\sigma}$ 
for univariate and $3\hat{\sigma}_{c_2}$ for multivariate time series, where
$\hat{\sigma}$ denotes the empirical standard deviation of the time
series. We used a product Gaussian kernel for $L$ in equation
\eqref{eq:locconst}. The bandwidth $h_2$ is a $p$-dimensional vector for
univariate time series and chosen as $(2\hat{\sigma},\ldots,2\hat{\sigma})$
due to stationarity, while for multivariate time series, $h_2$  is a
$pl$-dimensional vector and each entry is scaled by the standard deviation
of the corresponding component,
i.e., $(3\hat{\sigma}_{1},\ldots,3\hat{\sigma}_{l},\ldots,3\hat{\sigma}_{1},\ldots,3\hat{\sigma}_{l})$. 
In the univariate case, $h_1$ and each element from $h_2$ are equal. For
simplicity and with slight abuse of notation, we refer to both $h_1$ and
each entry of $h_2$ as $h$, where for the multivariate case we actually
mean that the bandwidth is scaled by the corresponding standard deviation
of the component. The number of boosting iterations $B$ is set to
10. The exact choices of $h$, $B$ are not crucial as long as both 
parameters are chosen reasonably large. Moreover, our estimator is
rather insensitive to the choice of $p$. This will be explained in more 
detail in Sections~\ref{section:choiceh} and \ref{section:choicep}. For the
reference method we simulated 1000 and for the true causal effect we
simulated 10000 time series for every intervention
variable and intervention value. 

\begin{sloppypar}The experimental findings are summarised in Table \ref{tab:modelcomp}. The
relative gain is calculated from the formula
$(\text{MSE}_{\mathrm{reference}}-\text{MSE}_{\mathrm{MINT-T}})/\text{MSE}_{\mathrm{reference}}$
and the acceleration factor from
$\text{time}_{\mathrm{reference}}/\text{time}_{\mathrm{MINT-T}}$. 
Our empirical results show that \mbox{MINT-T} outperforms the reference
procedure on all 
models except for the univariate $\AR$-models (model 1 and 2). In theory, we expect the reference to have an advantage when
the underlying model is additive, as it is the case with model 1. Even though model 2 is a nonlinear $\AR$-model, our results show that it can be well approximated through additive modelling. \mbox{MINT-T} is able to provide a
relative gain of 17$\%$-53$\%$ over the reference on the $\ARCH$-, $\GARCH$-, $\ARMA$- and the multivariate model.\end{sloppypar}

In some situations, it is of interest to choose intervention values
  that lie outside of the range of the time series. In
  Table~\ref{tab:modelcomp_interv}, we set the intervention value $d_i$ to
  3 times the $i^{th}$ decile of the simulated time series. As a result,
  the range of the simulated true causal effects becomes larger, and the
  MSE of both methods increases. In comparison, \mbox{MINT-T} remains more robust
  for intervention values that lie outside of the range of the data
  points and achieves a relative gain of at least 45$\%$ on all models.

\begin{table}[h!]
\centering
\begin{tabular}{lrrrrrrr}
\hline
 &\multicolumn{3}{c}{MSE}&\multicolumn{3}{c}{Time$[s]$}&\multicolumn{1}{c}{ True effect} \\
 &MINT-T&Reference&Gain/loss&MINT-T&Reference&Accel.& \multicolumn{1}{c}{between}\\
\hline
Model 1&0.0804& 0.0682 &-17.89\%&16.17&880.15&54 &[-0.6355,0.5645] \\
Model 2&0.0459& 0.0291&-57.73 \%& 16.26&922.38&57&[0.4408,1.2937]\\
Model 3&0.0026 &0.0046&+43.48\%& 8.48&419.74&49&0\\
Model 4 &0.0014 &0.0027&+48.15\% &17.14& 938.65&55&0\\
Model 5& 0.0333 &0.0711 &+53.16\%&17.17&946.76&55&[-0.381, 0.3267] \\
Model 6&0.1430 &0.1724&+17.05\%& 58.18&14788.29&254&[0.3114, 1.9647]\\
\hline
\end{tabular}
\caption{Comparison of \mbox{MINT-T} against the reference in terms of MSE and CPU time consumption per index pair.  Relative gain (indicated by +) and loss (indicated by -), and acceleration factor for the CPU time.}\label{tab:modelcomp}
\end{table}

\begin{table}[h!]
\centering
\begin{tabular}{lrrrrrrr}
\hline
 &\multicolumn{3}{c}{MSE}&\multicolumn{3}{c}{Time$[s]$}&\multicolumn{1}{c}{ True effect} \\
 &MINT-T&Reference&Gain/loss&MINT-T&Reference&Accel.& \multicolumn{1}{c}{between}\\
\hline
Model 1&0.1791& 0.5176&+65.40\%&21.26 &1036.13 &49 &[-1.9433,1.6647] \\
Model 2&0.4688& 1.2139&+61.38 \%& 33.85 &2086.42&62&[ 0.4538,1.9718]\\
Model 3&0.0261& 0.0475&+45.05\%& 11.76 &523.72&45&0\\
Model 4 &0.0079& 0.0213 &+62.91\% &19.87& 1064.73&54&0\\
Model 5& 0.1081& 0.4310 &+74.92\%&21.76&1021.42&47&[-1.1320, 0.9798] \\
Model 6& 0.5919& 2.6364 & +77.55\%& 64.53& 14506.34 &225 &[  0.3090, 3.1312 ]\\
\hline
\end{tabular}
\caption{Comparison of \mbox{MINT-T} against the reference in terms of MSE and CPU time consumption per index pair.  Relative gain (indicated by +) and loss (indicated by -), and acceleration factor for the CPU time. Unlike Table~\ref{tab:modelcomp} and \ref{tab:modelcomp_sq}, the intervention value $d_i$ here is equal to 3 times the $i^{th}$ decile of each time series.}\label{tab:modelcomp_interv}
\end{table}

In many applications, we are interested in the effect of an intervention on
a transformed response variable. For example, any causal effect is
identical to zero in $\ARCH$-models since  
\begin{eqnarray}
\E[X_t\,\vert\,\Do(X_{t-s}=x)]&=&\int \E[X_t\,\vert\,X_{t-s}=x,X_{t-s}^\Superset] dP(X_{t-s}^\Superset) \nonumber \\
&=&\int \E[\sigma_t\epsilon_t\,\vert\,X_{t-s}=x,X_{t-s}^\Superset] dP(X_{t-s}^\Superset) \nonumber \\
&=&\int \E[\sigma_t\,\vert\,X_{t-s}=x,X_{t-s}^\Superset] \E[\epsilon_t]dP(X_{t-s}^\Superset) =0. \nonumber
\end{eqnarray} 
An intervention on a squared response variable is usually nonzero in $\ARCH$-models and corresponds to an intervention  on the volatility function up to a constant:
\begin{eqnarray}
\E[X_t^2\,\vert\,\Do(X_{t-s}=x)]&=&\int \E[X_t^2\,\vert\,X_{t-s}=x,X_{t-s}^\Superset] dP(X_{t-s}^\Superset) \nonumber \\
&=&\int \E[\sigma_t^2\epsilon_t^2\,\vert\,X_{t-s}=x,X_{t-s}^\Superset] dP(X_{t-s}^\Superset) \nonumber \\
&=&\int \E[\sigma_t^2\,\vert\,X_{t-s}=x,X_{t-s}^\Superset]E[\epsilon_t^2]dP(X_{t-s}^\Superset) \nonumber \\
&=&\Var(\epsilon_t)\E[\sigma_t^2\,\vert\,\Do(X_{t-s}=x)] \nonumber.
\end{eqnarray} 

\begin{table}[!htb]
\centering
\begin{tabular}{lrrrrrrr}
\hline
 &\multicolumn{3}{c}{MSE}&\multicolumn{3}{c}{Time$[s]$}&\multicolumn{1}{c}{ True effect} \\
 &MINT-T&Reference&Gain/loss&MINT-T&Reference&Accel.& \multicolumn{1}{c}{between}\\
\hline
Model 1&0.0297 &0.0482 &+38.38\%&17.00&967.50&57&[1.1372,1.5856]\\
Model 2&  0.1647 &0.1746&+5.67\%& 17.49&957.27&55&[1.7211,3.2016]\\
Model 3& 0.0026 &0.0045& +42.22\%& 8.76&401.30&46&[0.1523,0.2924] \\
Model 4 & 0.0008 &0.0009&+11.11\%& 17.12& 930.56&54&[0.0502,0.0940]\\
Model 5 &0.0125& 0.0407& +69.29\%&17.52 &929.60&53&[0.3184,0.4700]  \\
Model 6&1.1216 &1.2232& +8.31 \%&60.51&16051.16&265&[1.5119,5.4317]  \\
\hline
\end{tabular}
\caption{Comparison of \mbox{MINT-T} against the reference in
  terms of MSE and CPU time consumption per index pair for the squared
  response variable. Relative gain (indicated by +) and loss (indicated by -), and acceleration factor for the CPU time. } 
\label{tab:modelcomp_sq}
\end{table}

Similarly, stationary $\GARCH$-processes can be rewritten as $\ARCH(\infty)$ processes, and the causal effect is identical to 0, while a squared $\GARCH$-process can be rewritten as a causal and invertible $\ARMA$-process under certain conditions \cite[Proposition 4.2]{fan2005}.  
Our approach allows for the estimation of causal effects on transformed response variables with arbitrary real-valued transformations (see Remark~\ref{remark:transf}). We repeat the analysis for a  squared response variable, and the experimental results are listed in Table~\ref{tab:modelcomp_sq}. \mbox{MINT-T} outperforms the reference on all models. The relative gain lies between 6$\%$ on the nonlinear $\AR$-model to 69$\%$ on the $\ARMA$-model.

Another advantage of \mbox{MINT-T} is the computation time. The computation time of
\mbox{MINT-T} depends mainly on the size of the adjustment set and the number of
boosting iterations. On the other hand, prediction and fitting of the
generalised additive models require most of the computation time for the
reference method. The reference method fits once and predicts once for
every component at every time point. \mbox{MINT-T} provides an acceleration by a factor
of 53, on average, for univariate and an acceleration by a factor of 248, on average,
for multivariate time series. Therefore, \mbox{MINT-T} remains feasible for
multivariate time series in potentially
large-dimensional problems.

\subsubsection{The choice of the bandwidth $h$}
\label{section:choiceh}

\begin{figure}[!htb]
    \centering
    \begin{subfigure}[b]{0.7\textwidth}
        \includegraphics[width=\textwidth]{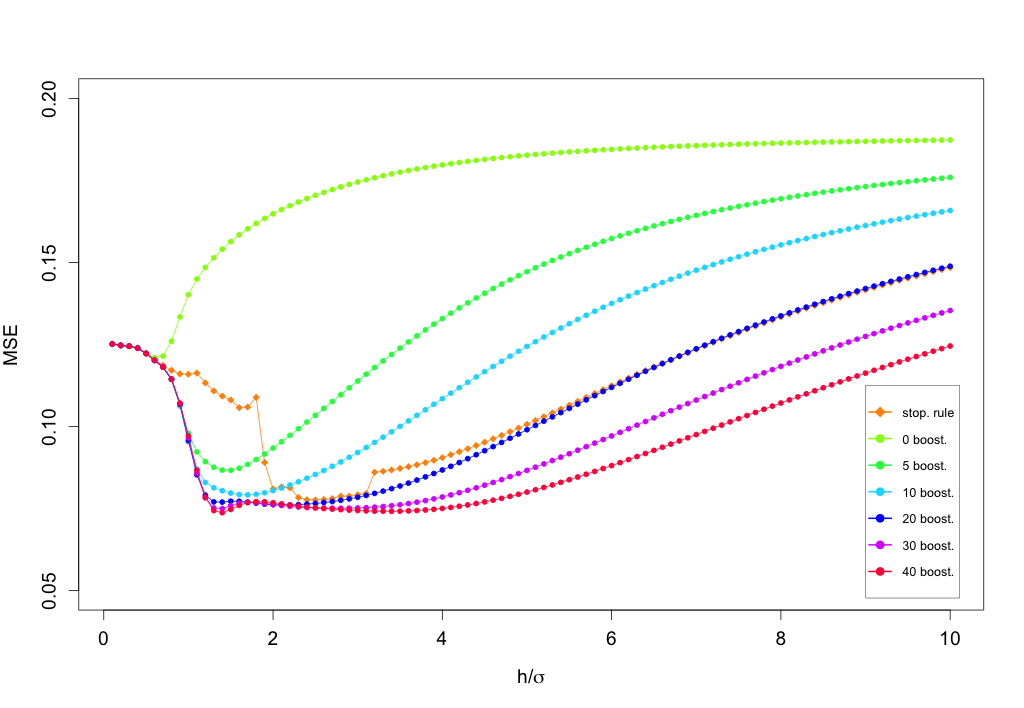}
        \caption{Model 1}
        \label{fig:model1_h}
    \end{subfigure}
    
    \begin{subfigure}[b]{0.7\textwidth}
        \includegraphics[width=\textwidth]{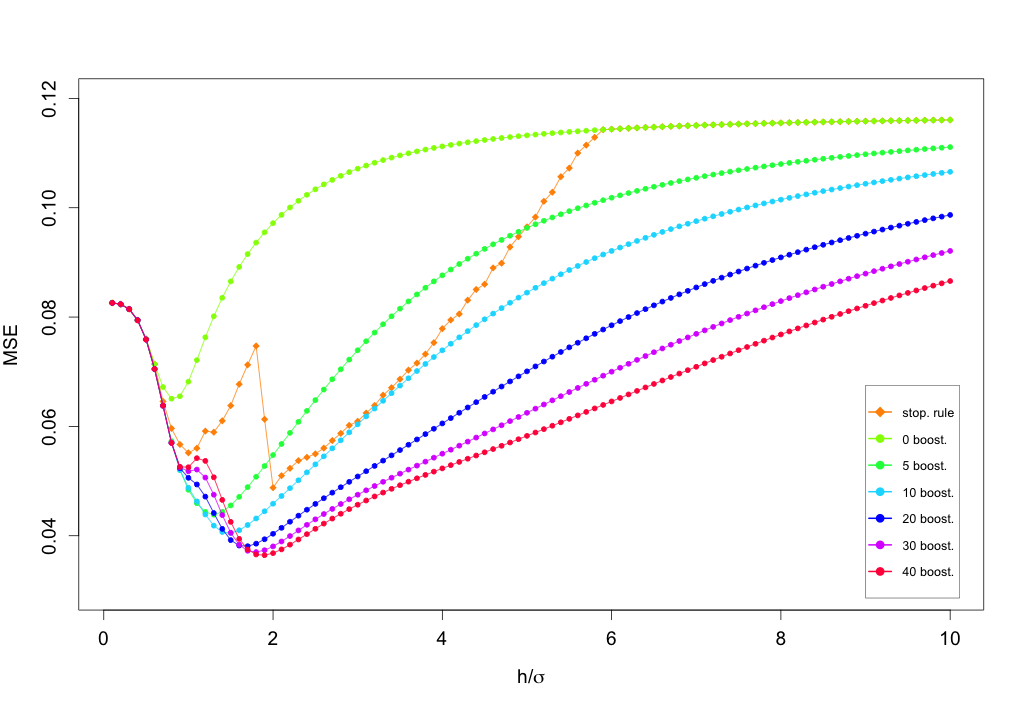}
        \caption{Model 2}
        \label{fig:model2_h}
    \end{subfigure}
\caption{Dependence on bandwidth $h$. MSE values, for models 1 and 2, for $h$
   between $0.1\hat{\sigma}$ 
   and $10\hat{\sigma}$ ($x$-axis with scaled $h/\hat{\sigma}$) for \mbox{MINT-T},
   without boosting, with fixed number of 
   boosting iterations, and with stopping rules. The time lag $p$ is set to 10.}\label{fig:bandwidth1} 
\end{figure}

\begin{figure}[!htb]
    \centering
     \begin{subfigure}[b]{0.7\textwidth}
        \includegraphics[width=\textwidth]{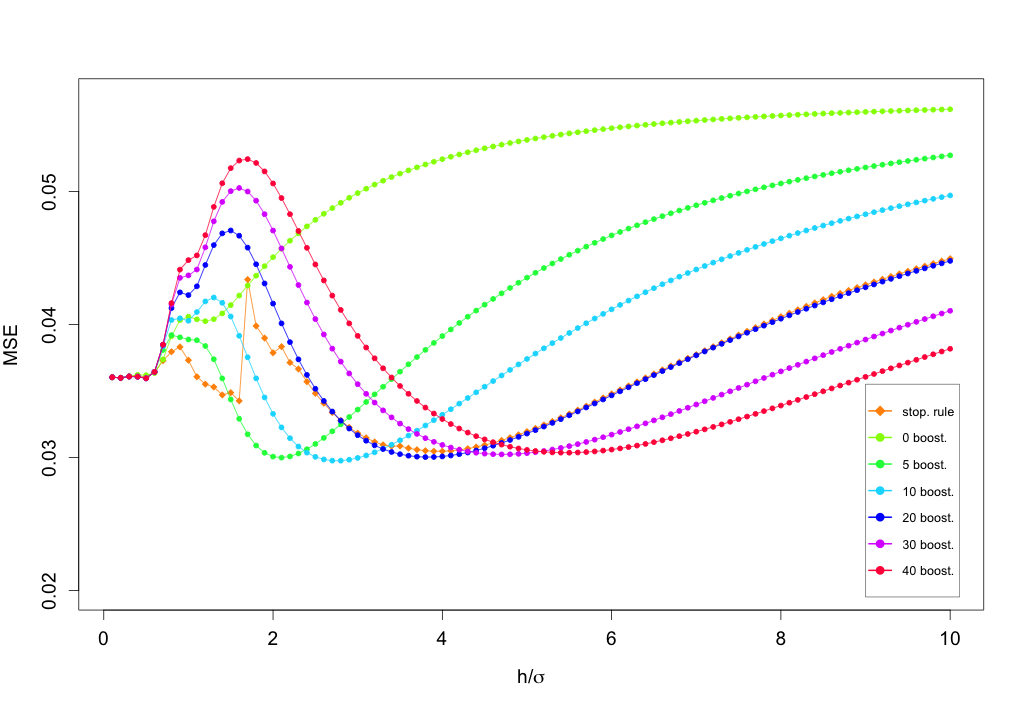}
        \caption{Model 5}
        \label{fig:model5_h}
    \end{subfigure}

     \begin{subfigure}[b]{0.7\textwidth}
        \includegraphics[width=\textwidth]{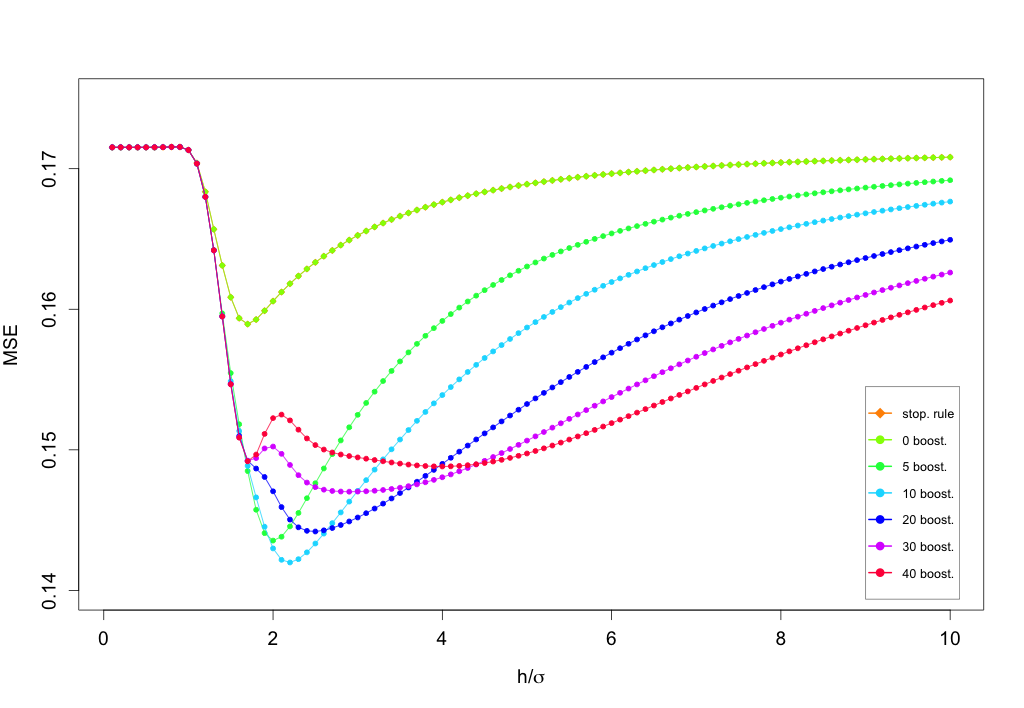}
        \caption{Model 6. The MSE of the estimate with stopping rules and the estimate without boosting coincide. In addition to
      $\hat{\sigma}$, the bandwidth of the multivariate model is scaled by
      a factor of 1.5. }
        \label{fig:model_multi_h}
    \end{subfigure}
\caption{Dependence on bandwidth $h$. MSE values, for models 5 and 6, for $h$
   between $0.1\hat{\sigma}$ 
   and $10\hat{\sigma}$ ($x$-axis with scaled $h/\hat{\sigma}$) for \mbox{MINT-T},
   without boosting, with fixed number of 
   boosting iterations, and with stopping rules. The time lag $p$ is set to 10.}\label{fig:bandwidth2} 
\end{figure}

We tested different bandwidths $h$ in the range of $0.1\hat{\sigma} - 10\hat{\sigma}$ for the simulated time series from Section~\ref{section:empirical}. In Figures~\ref{fig:bandwidth1} and \ref{fig:bandwidth2}, the MSE values are plotted against $h$.
We observe that with no
boosting, the performance is sensitive to the choice of the bandwidth. There is typically an optimal bandwidth if no boosting iterations are performed. For example, the optimal bandwidth is $h\approx0.6\hat{\sigma}$ for model 1 and $h\approx0.8\hat{\sigma}$ for model 5. The sensitivity largely disappears with increasing number of boosting iterations. Moreover, boosting is able to decrease the MSE. Therefore, we can simply take a
larger bandwidth in connection with subsequent boosting. We suggest the
following rule-of-thumb: $h=2\hat{\sigma}$ for univariate and $h=c_l\cdot
2\hat{\sigma}$ for multivariate time series along with $10$ boosting iterations. The factor $c_l$ is
approximately equal to $n^{\frac{1}{4+p}-\frac{1}{4+pl}}$ and corrects for
the dimensionality of the adjustment sets, i.e., $p$ for univariate and $pl$
for multivariate time series. For $n = 1000$, $l = 4$ and $p=10$, we used $c_l=1.5$. 
The observed results are consistent with what
we outlined in Section~\ref{section:implementation}: it is favourable to
choose large bandwidths, which results in an initial estimate with large
bias and small variance, and the bias is subsequently reduced by the
boosting iterations. 

In order to avoid unnecessary boosting iterations or "overboosting", we implemented the following stopping rules, which are also shown in Figures~\ref{fig:bandwidth1} and \ref{fig:bandwidth2}. First, we sum the absolute differences between two consecutive approximations:
\begin{equation} \label{eq:stoppingRule}
C(b):=\sum\limits_{i=1}^9 | (n-s-p)^{-1} \sum\limits_{k=s+p+1}^n \hat{g}_{R_b} (d_i, \textbf{X}_{k-s}^{\mathcal{S}}) | .
\end{equation}  
Then we terminate the boosting iterations if either the absolute difference $C(b)$ is smaller than 0.5\% of the previous estimate $\sum\limits_{i=1}^9 | (n-s-p)^{-1} \sum\limits_{k=s+p+1}^n \hat{m}_{b} (d_i, \textbf{X}_{k-s}^{\mathcal{S}}) |$ or if the proportion of two subsequent differences  $C(b)/C(b-1)$ is less than 75\% (see Section~\ref{section:implementation}). This way, we achieve little to no boosting for small bandwidths (small bias, high variance estimate) and more boosting iterations for larger bandwidths (high bias, small variance estimate). Apparently, our proposed stopping rule performs reasonably well in the univariate examples we
considered. For multivariate time series, the percentages used for the stopping rule should be adapted to the dimension of the time series. Additional simulations for the remaining models are presented
in Section~\ref{app:choiceh}.

\subsubsection{The choice of the time lag $p$ for adjustment}
\label{section:choicep}

\begin{figure}[!htb]
    \centering
    \begin{subfigure}[b]{0.45\textwidth}
        \includegraphics[width=\textwidth]{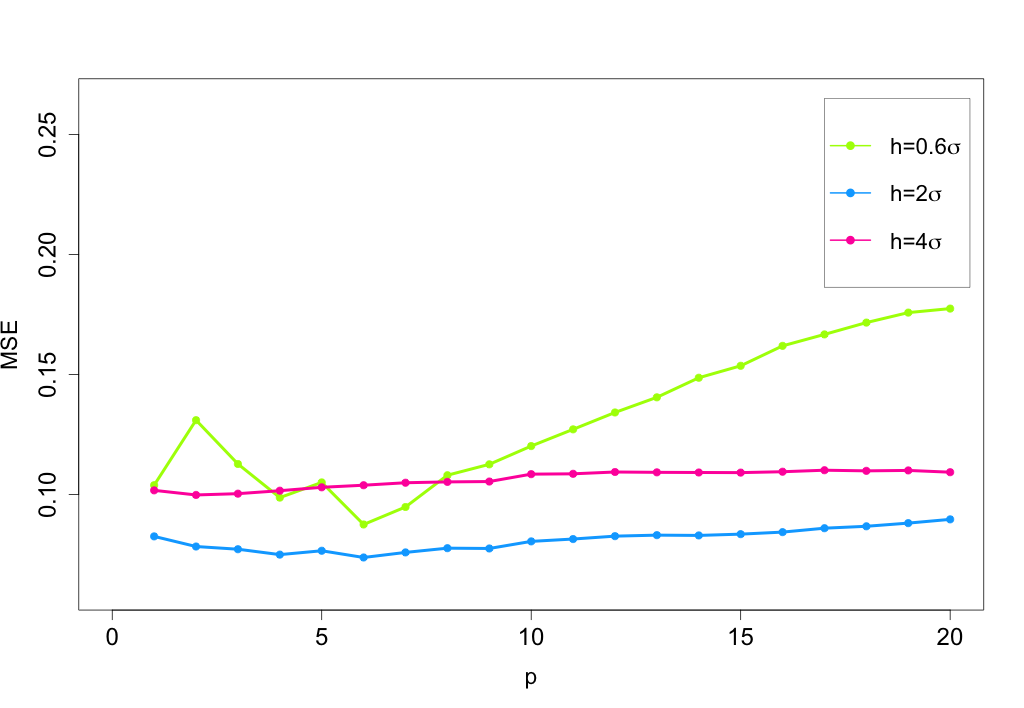}
        \caption{Model 1}
        \label{fig:model1_lag}
    \end{subfigure}
    \begin{subfigure}[b]{0.45\textwidth}
        \includegraphics[width=\textwidth]{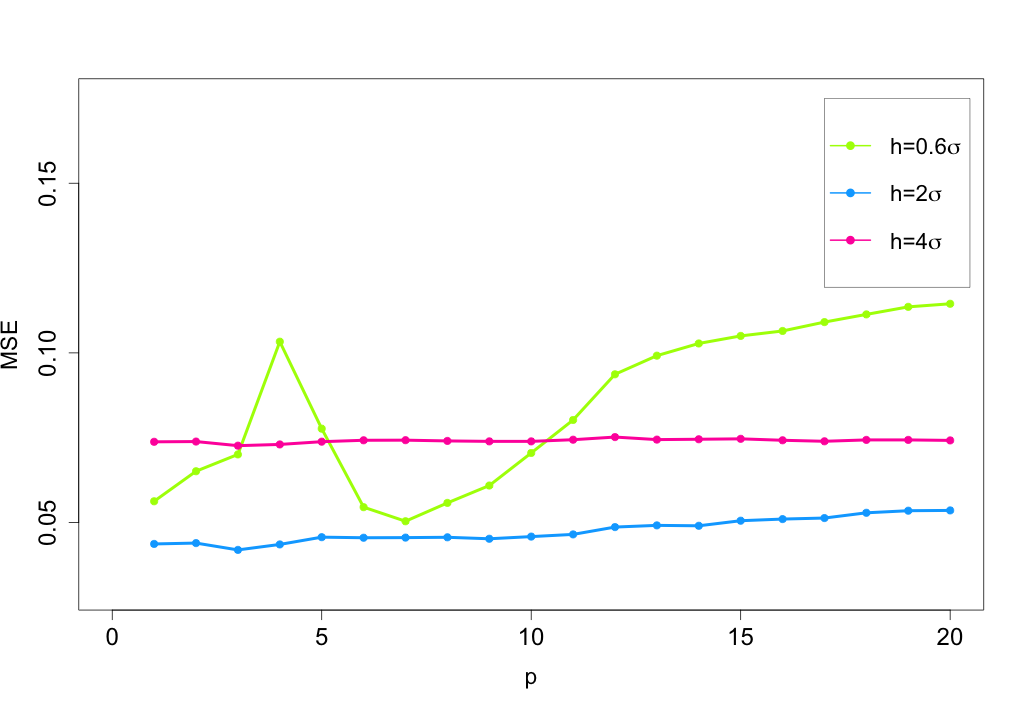}
        \caption{Model 2}
        \label{fig:model2_lag}
    \end{subfigure}
    \begin{subfigure}[b]{0.45\textwidth}
        \includegraphics[width=\textwidth]{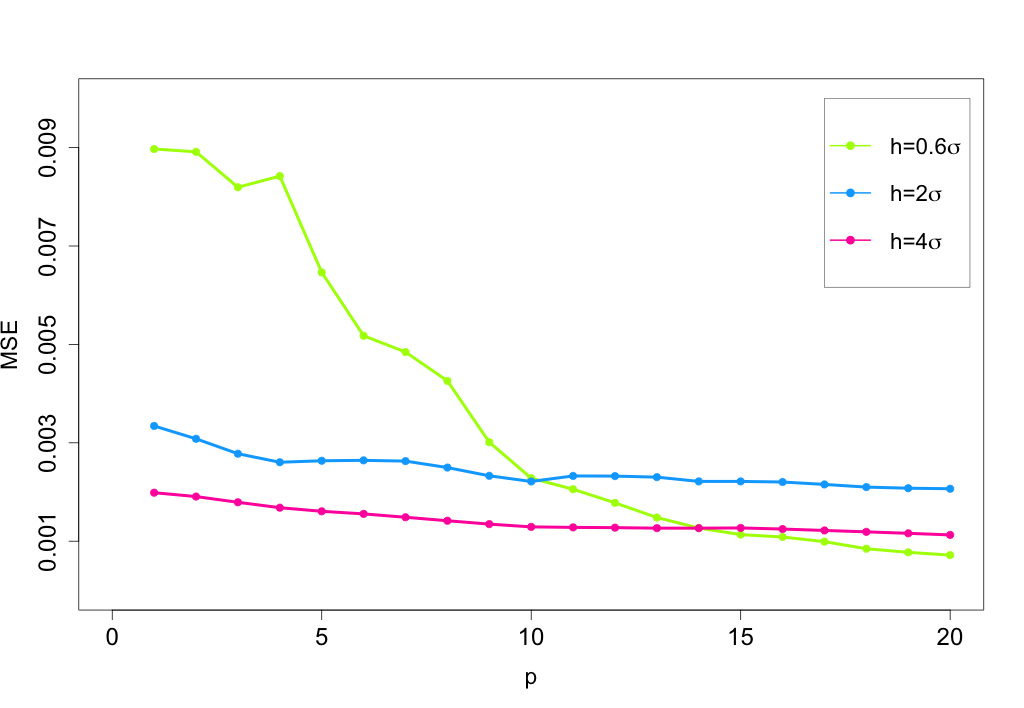}
        \caption{Model 3}
        \label{fig:model3_lag}
    \end{subfigure}
    \begin{subfigure}[b]{0.45\textwidth}
        \includegraphics[width=\textwidth]{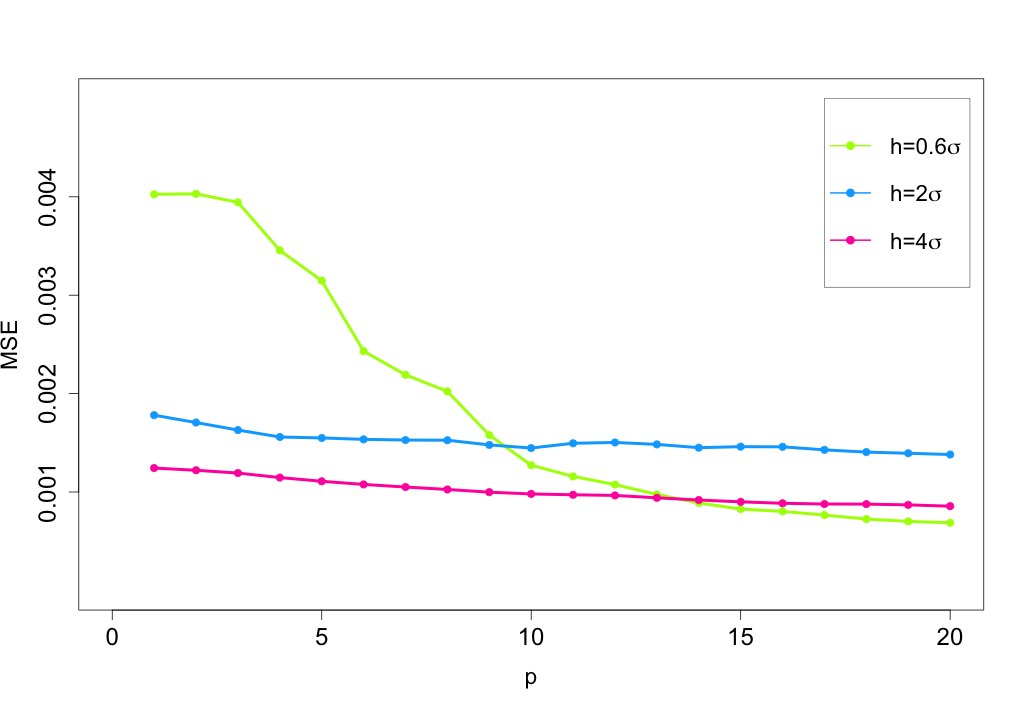}
        \caption{Model 4}
        \label{fig:model4_lag}
    \end{subfigure}
         \begin{subfigure}[b]{0.45\textwidth}
        \includegraphics[width=\textwidth]{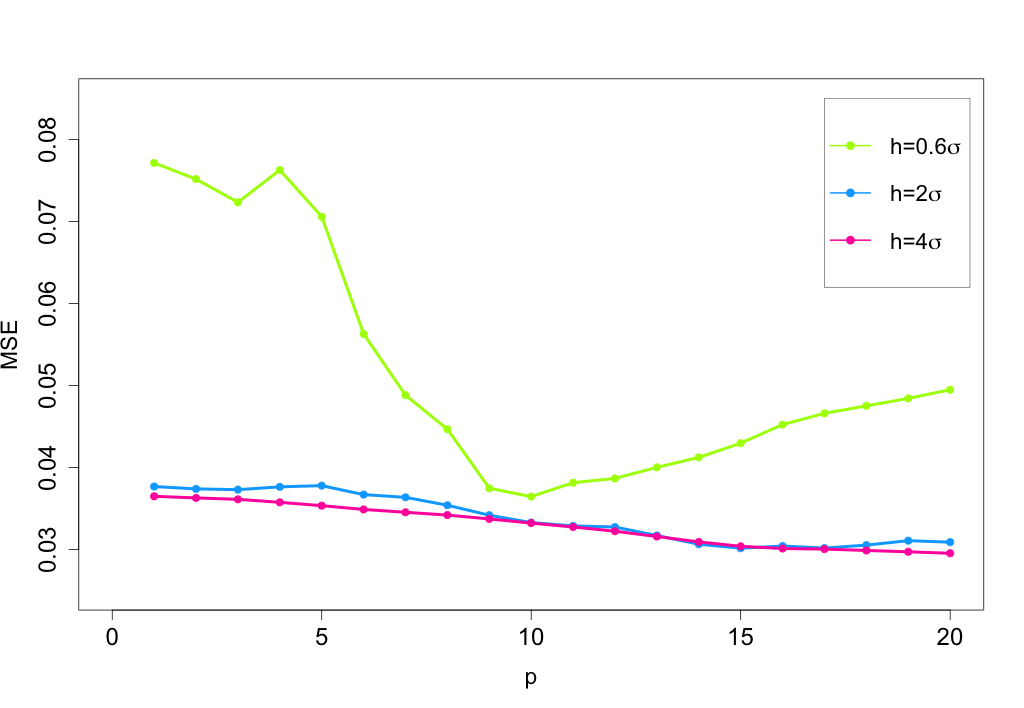}
        \caption{Model 5}
        \label{fig:model5_lag}
    \end{subfigure}
     \begin{subfigure}[b]{0.45\textwidth}
        \includegraphics[width=\textwidth]{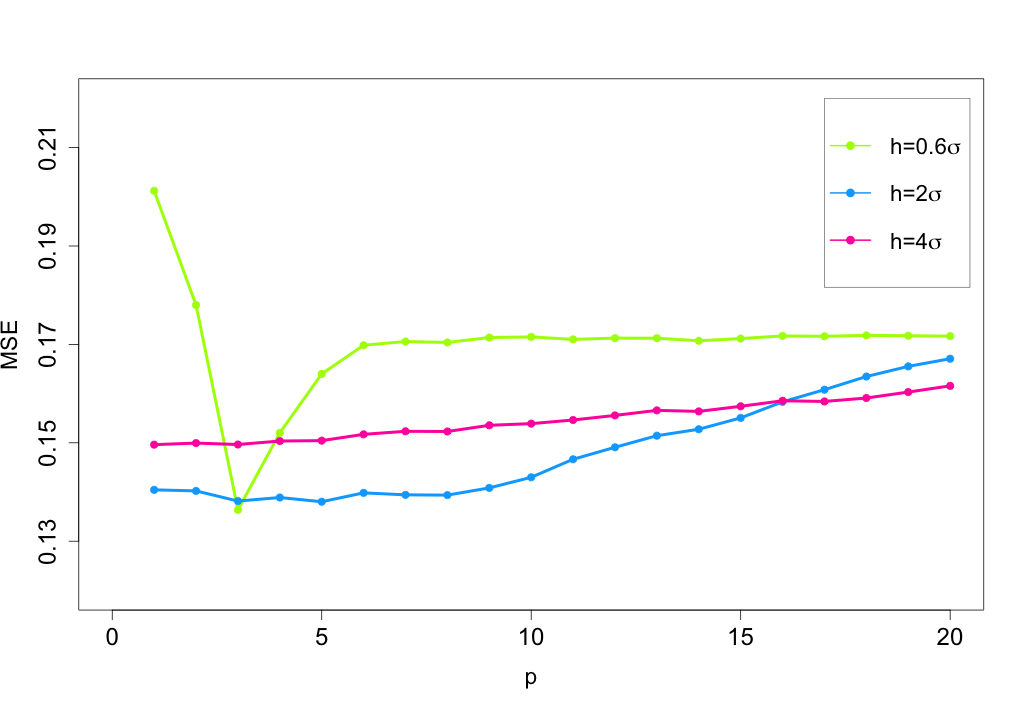}
        \caption{Model 6 }
        \label{fig:model_multi_lag}
    \end{subfigure}

\caption{Dependence on lag $p$. MSE values for $p$
   between $1$ 
   and $20$ for \mbox{MINT-T} with 10 boosting iterations. In addition to
      $\hat{\sigma}$, the bandwidth of the multivariate model is scaled by
      a factor of 1.5.}\label{fig:lag} 
\end{figure}

\mbox{MINT-T} requires an estimated time lag $p$ for the adjustment set indicated
with $\Superset$. We ran \mbox{MINT-T} on the simulated
time series from Section~\ref{section:empirical} for different values of
$p$ between 1 and 20. The corresponding MSE values for $h=0.6\hat{\sigma}$,
$h=2\hat{\sigma}$ and $h=4\hat{\sigma}$ are shown in Figure~\ref{fig:lag}. 

We observe in Figure~\ref{fig:lag} that for small bandwidths, e.g., $h=0.6\hat{\sigma}$, the performance is sensitive to the choice of $p$, while with larger bandwidths, e.g., $h=2\hat{\sigma}$, the sensitivity mostly disappears. The \mbox{reference} method deteriorates for
misspecified $p$, though. Therefore, particularly when \mbox{choosing} a large
bandwidth $h$, \mbox{MINT-T} is much more robust against model 
misspecification and rather insensitive to the choice of $p$. 

Our empirical results suggest that \mbox{MINT-T} is overall
surprisingly insensitive to the choice of the three tuning parameters, and
this constitutes a substantial practical advantage: we
should take a large bandwidth with sufficient amount of boosting iterations (we used $2 \hat{\sigma}$ for univariate
time series or $c_l \cdot 2 \hat{\sigma}$ for multivariate time series and $10$ boosting iterations), and then the choice of the lag
$p$ for adjustment does not matter much anymore (we suggest to inspect the
partial autocorrelation function of the time series or its transformed
value when considering the causal effect for a transformed response).

\subsection{Real data}

In this section we analyse financial data with \mbox{MINT-T}. Financial returns often show no evidence of serial correlation, however, when transformed, they often do. Therefore, it is more interesting to study the effects on a transformed response variable, e.g., $\E[X_{c_1,t}^2\,\vert\,\Do(X_{c_2,t-s}=d_i)]$. For each data set, monetary policy and currency data, we provide an estimated graph of the causal effects $\E[X_{c_1,t}^2\,\vert\,\Do(X_{c_2,t-s}=d_i)]$. These differ from the Markov graphs prevalent in the causal inference literature as the edges in the graphs represent substantial total causal effects instead of direct effects.

\subsubsection{Currency data}
\label{section:currency}
\begin{figure}[!htb]
    \centering
    \includegraphics[width=1\textwidth]{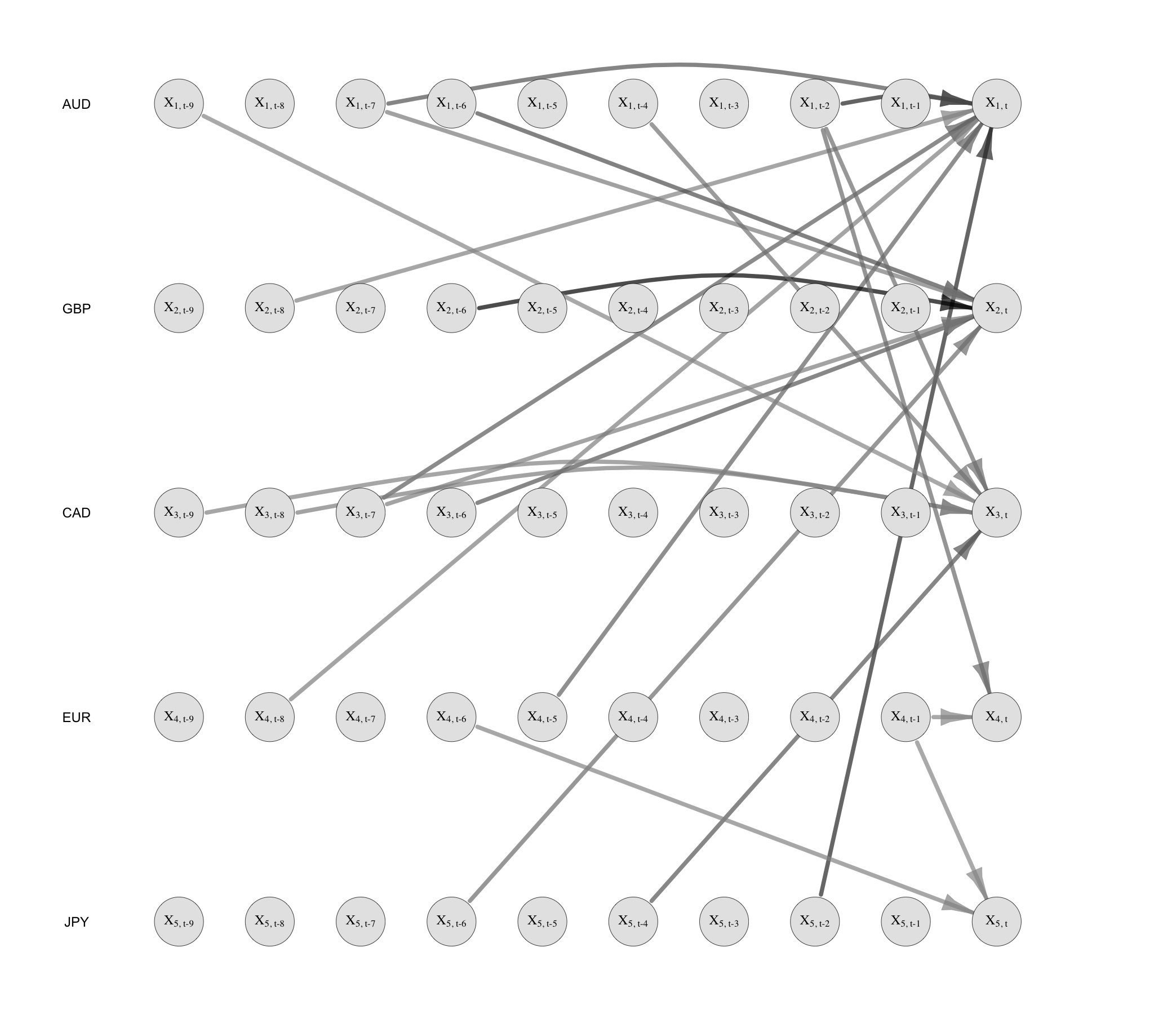}
    \caption{Currency data: log-returns of daily exchange rates of AUD,
      GBP, CAD, EUR and JPY vs. USD between January 4 1999 and October 15
      2010. An edge encodes a substantial estimated causal effect for the
      squared response $\E[X_{c_1,t}^2\mid\Do(X_{c_2,t-s}=x)]$, and its intensity is proportional to the strength of the estimated total causal
      effect.}
    \label{fig:meinshausen}
\end{figure}

We consider currency data containing the daily
exchange rates of five currencies versus US dollar from January 4 1999 to
October 15 2010. The time series components are AUD/USD, GBP/USD, CAD/USD,
EUR/USD and JPY/USD. We preprocessed the data by taking the log transform
and first order differencing. The resulting time series exhibits
heteroscedasticity and thus, it is worthwhile to study the effect of
interventions on the volatility function $\E[X_{c_1,t}^2\,\vert\,\Do(X_{c_2,t-s}=d_i)]$. We note that there is a correspondence with an intervention on the original currency value $P_{c_2,t-s}$. Since $X_{c_2,t-s} = \log(P_{c_2,t-s}) - \log(P_{c_2,t-s-1})$, we have that $\E[X_{c_1,t}^2\,\vert\,\Do(P_{c_2,t-s} = v)] =  \E[X_{c_1,t}^2\,\vert\,\Do(X_{c_2,t-s} = \log(v) - \log(P_{c_2,t-s-1}) )]$ with $v$ being the intervention value and $\log(P_{c_2,t-s-1})$ the observational log-price from one time-lag before the intervention takes place. 

We estimate the causal effects
$\E[X_{c_1,t}^2\,\vert\,\Do(X_{c_2,t-s}=d_i)]$ for every $s = 1,\ldots,9$,
$c_1,c_2=1,\ldots,l $ and the nine deciles of the time series
$d_1,\ldots,d_9$. The tuning parameters are chosen as $p=10$, $B=10$ and $h=3\hat{\sigma}$. 
\begin{sloppypar}We represent these causal effects in a graph with nodes
corresponding to the random variables from the $l$ components and the time indices $t, t-1, \ldots, t-9$. We draw an edge from the node
corresponding to $X_{c_2,t-s}$ to $X_{c_1,t}$ if the relative strength of the causal effects $CS_{c_1,c_2}(s):=(\sum_{i=1}^9\vert\hat{\E}[X_{c_1,t}^2\,\vert\,\Do(X_{c_2,t-s}=d_i)]-E[X_{c_1,t}^2]\vert - \frac{1}{9}\sum_{s=1}^9 \sum_{i=1}^9\vert\hat{\E}[X_{c_1,t}^2\,\vert\,\Do(X_{c_2,t-s}=d_i)]-E[X_{c_1,t}^2]\vert)/E[X_{c_1,t}^2]$ exceeds a
threshold. We subtracted $\frac{1}{9}\sum_{s=1}^9
\sum_{i=1}^9\vert\hat{\E}[X_{c_1,t}^2\,\vert\,\Do(X_{c_2,t-s}=d_i)]-E[X_{c_1,t}^2]\vert
$ from the causal strength to balance the values across the different time series components. We set the threshold
to the ninth decile of the values in the set $\left\{
  CS_{c_1,c_2}(s)\,\vert\, 
 c_1,c_2=1,\ldots,5,\,s = 1,\ldots,9 \right\}$. The resulting graph is shown in
Figure~\ref{fig:meinshausen}. The intensity of an edge is
proportional to the magnitude of the
  values in $ CS_{c_1,c_2}(s)$.   
  \end{sloppypar}

In Figure~\ref{fig:meinshausen}, the exchange rates of AUD, GBP, CAD and EUR are each affected by their previous values. If we intervene on a currency exchange rate, we expect a change in the demand for the currency, which affects the exchange rate at the following time points. Furthermore, we observe edges linking different components in Figure~\ref{fig:meinshausen}. This implies that an intervention at one currency not only alters its own future value, but the future values of other currencies as well. For example, the exchange rates between AUD, GBP and CAD are closely connected, which can be explained by economical bonds and political links between the commonwealth countries.

We described in Section~\ref{sec:instant} that MINT-T is able to
  retrieve total causal links between lagged variables in the presence of
  instantaneous effects as long as the time lag from the intervention to
  the target is at least one. The time
  resolution is one day in the currency data set. Therefore, instantaneous
  effects cannot be ruled out completely. We reestimated the causal graph,
  this time taking instantaneous effects into account. The graph in
  Figure~\ref{fig:meinshausen_instant} was drawn in the same way described
  before with exactly the same choice of
  tuning parameters but estimating the 
  quantity $\tilde{\E}[X_{c_1,t}^2\,\vert\,\Do(X_{c_2,t-s}=d_i)]$ instead
  of $\E[X_{c_1,t}^2\,\vert\,\Do(X_{c_2,t-s}=d_i)]$. The differences
  between Figure~\ref{fig:meinshausen} and
  Figure~\ref{fig:meinshausen_instant} are subtle. In
  Figure~\ref{fig:meinshausen_instant} the edges within the same currency
  are more pronounced and JPY is more susceptible to interventions to GBP
  and EUR than in Figure~\ref{fig:meinshausen}. 

\begin{figure}[!htb]
    \centering
    \includegraphics[width=1\textwidth]{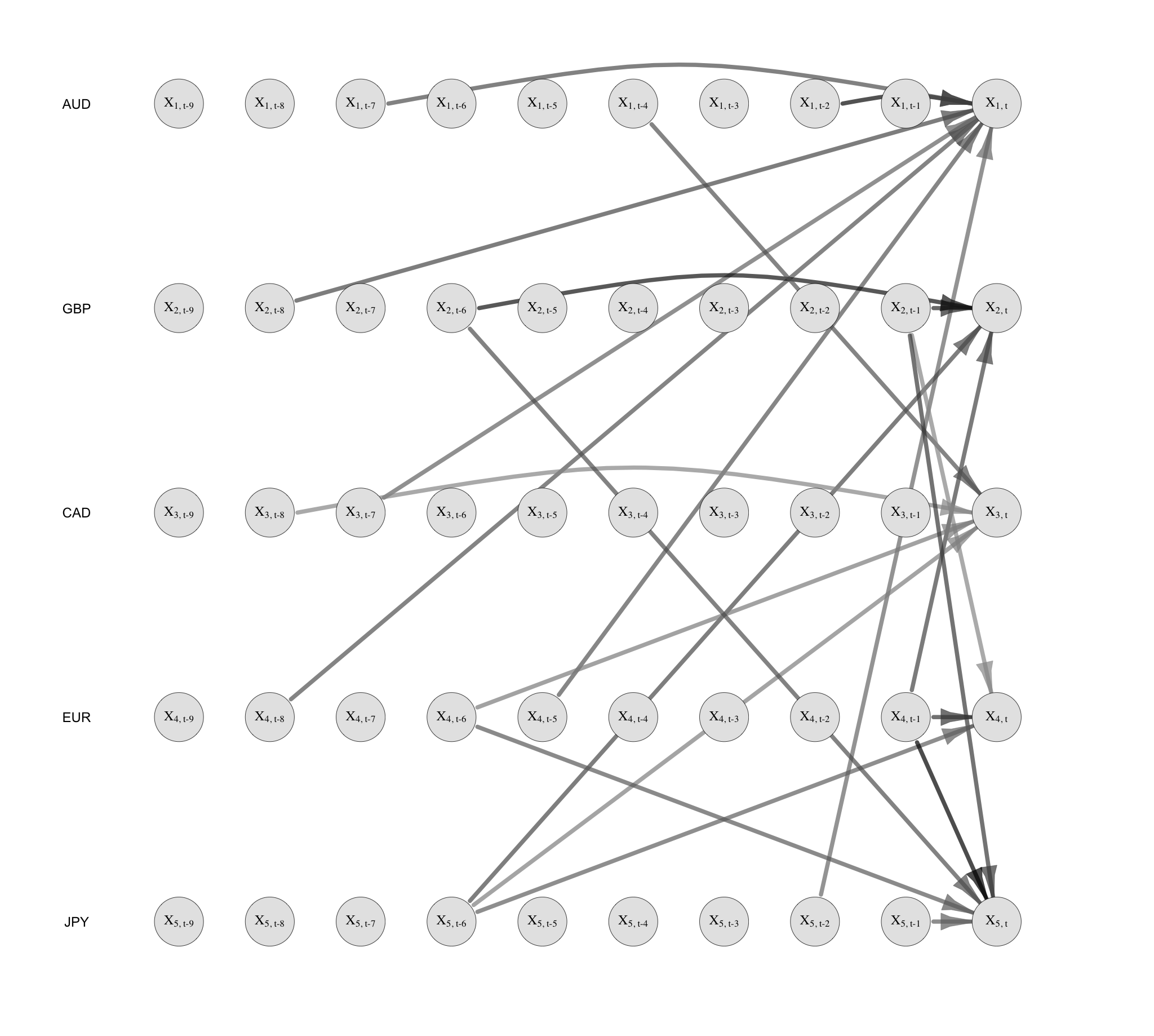}
    \caption{Currency data: log-returns of daily exchange rates of AUD,
      GBP, CAD, EUR and JPY vs. USD between January 4 1999 and October 15
      2010. An edge encodes a substantial estimated effect for the
      squared response $\tilde{ \E}[X_{c_1,t}^2\mid\Do(X_{c_2,t-s}=x)]$
      with instantaneous effects.}
    \label{fig:meinshausen_instant}
\end{figure}
\bigskip

\subsubsection{Macroeconomic data}
\label{section:macro}

\begin{figure}[!htb]
    \centering
    \includegraphics[width=1\textwidth]{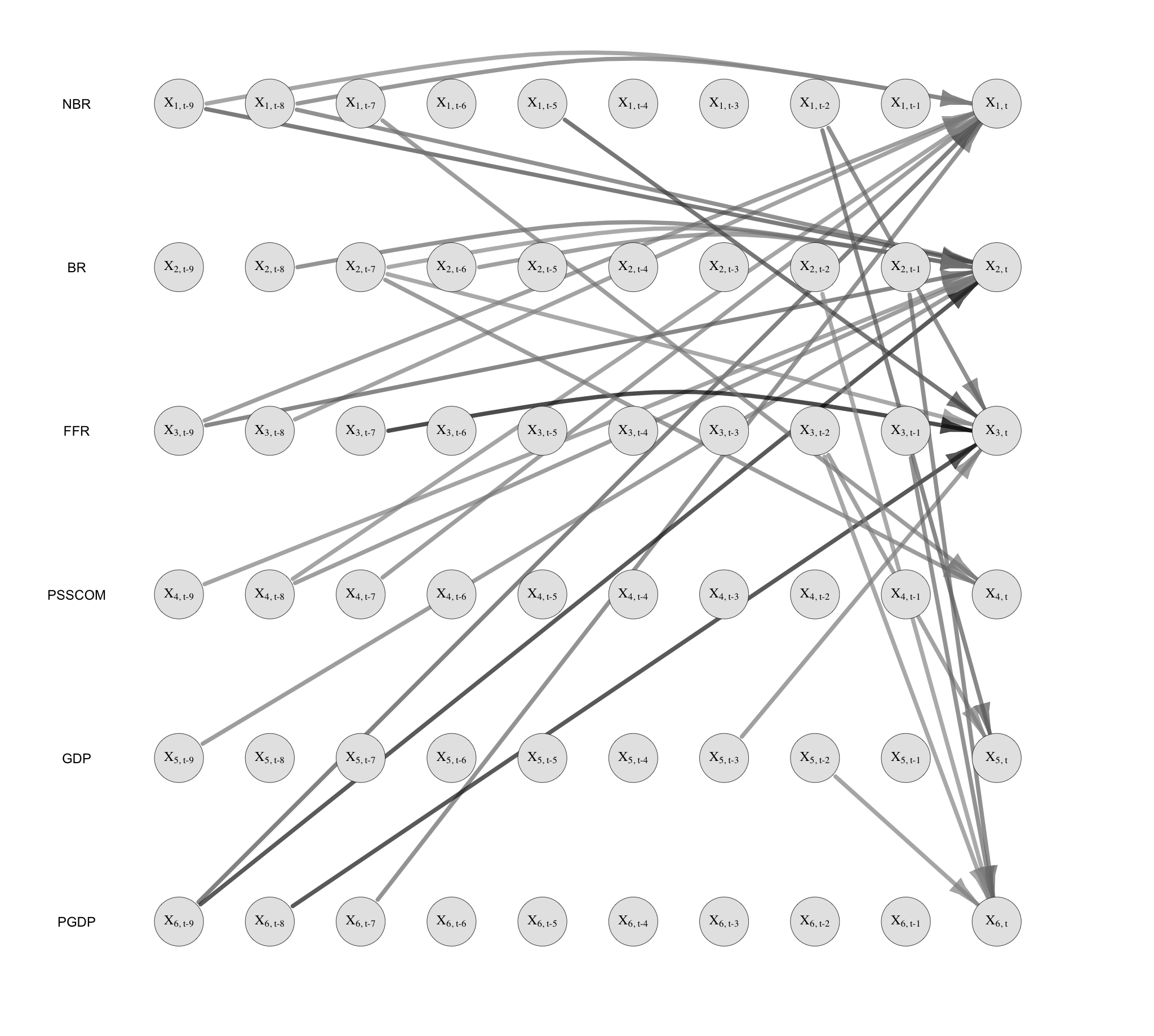}
    \caption{Macroeconomic data \cite{bernanke1998}, consisting of monthly
      observations of $NBR_t$, $BR_t$, $FFR_t$, $PSCCOM_t$, $GDP_t$,
      $PGDP_t$ between January 1965 and December 1996 (first differences of
      log-transformed values, see text). An edge
      encodes a substantial estimated causal effect for the 
      squared response $\E[X^2_{c_1,t}\mid\Do(X_{c_2,t-s}=x)]$, and its
      intensity is proportional to the strength of the estimated total 
      causal effect.} 
    \label{fig:moneta}
\end{figure}
Next, we look at macroeconomic data provided by \citet{bernanke1998}. The data set contains six monthly US time series from January 1965 to December
1996. The components are non-borrowed reserves and extended credit $NBR_t$,
borrowed reserves $BR_t$, the federal funds rate $FFR_t$, the Dow-Jones
index of spot commodity prices $PSCCOM_t$, real gross domestic product
$GDP_t$ and the gross domestic product deflator $PGDP_t$. The variables can
be grouped into policy ($BR_t$, $NBR_t$, $FFR_t$) and macroeconomic variables
($GDP_t$, $PGDP_t$, $PSCCOM_t$). The data was preprocessed by taking the
log transform and differencing. Due to heteroscedasticity, we focus on the
effect of interventions on the volatility function. 

We estimate the graph as described in Section~\ref{section:currency} with the same choice of tuning parameters, and
the result is shown in Figure~\ref{fig:moneta}. The variable that
is influenced most by interventions is $FFR_t$. $FFR_t$ is often described
as a key indicator of monetary policy shocks
\cite{moneta2013causal}.
In the latter reference a parametric model is considered allowing also for
instantaneous effects. We will permit instantaneous effects in a
nonparametric setting when using the procedure from Section
\ref{sec:instant}, and the results are given in Figure
\ref{fig:moneta_instant}.
By law, banks are required to hold a minimum fraction of customer deposits as reserves at
the federal reserve. If banks own less than the minimum fraction, they may
choose to borrow the required amount from another bank or the federal
reserve. Otherwise, they may lend excessive reserves to other banks. The
federal funds rate is the interest rate at which banks trade balances held
at the federal reserve. The causal links between $NBR_t$, $BR_t$ and $FFR_t$ reflect this
relationship. For example, $FFR_{t}$ responds to the changes in demand for borrowed and non-borrowed reserves. Furthermore, the federal reserve observes macroeconomic variables in order to regulate the federal funds rate through open market operations. Hence,
$FFR_t$ is also targeted by interventions on non-policy variables such as income ($GDP_t$) and price level ($PGDP_t$). 

Since the time resolution of the macroeconomic data set is one month,
  we reestimated the graph for the macroeconomic data taking potential
  instantaneous effects into account. The graph in
  Figure~\ref{fig:moneta_instant} is 
  based on estimates of $\tilde{\E}[X_{c_1,t}^2\,\vert\,\Do(X_{c_2,t-s}=d_i)]$
  instead of $\E[X_{c_1,t}^2\,\vert\,\Do(X_{c_2,t-s}=d_i)]$ while the
  tuning parameters are unchanged. Similar to the currency data, the
  differences between Figure~\ref{fig:moneta} and
  Figure~\ref{fig:moneta_instant} are subtle. For example, the strongest
  causal links from Figure~\ref{fig:moneta}, i.e., $NBR_{t-5} \rightarrow
  FFR_{t}$,  $FFR_{t-7} \rightarrow FFR_{t}$, $PGDP_{t-8} \rightarrow
  FFR_{t}$ and $PGDP_{t-9} \rightarrow BR_{t}$ remain the strongest links
  in Figure~\ref{fig:moneta_instant}. 
\begin{figure}[!htb]
    \centering
    \includegraphics[width=1\textwidth]{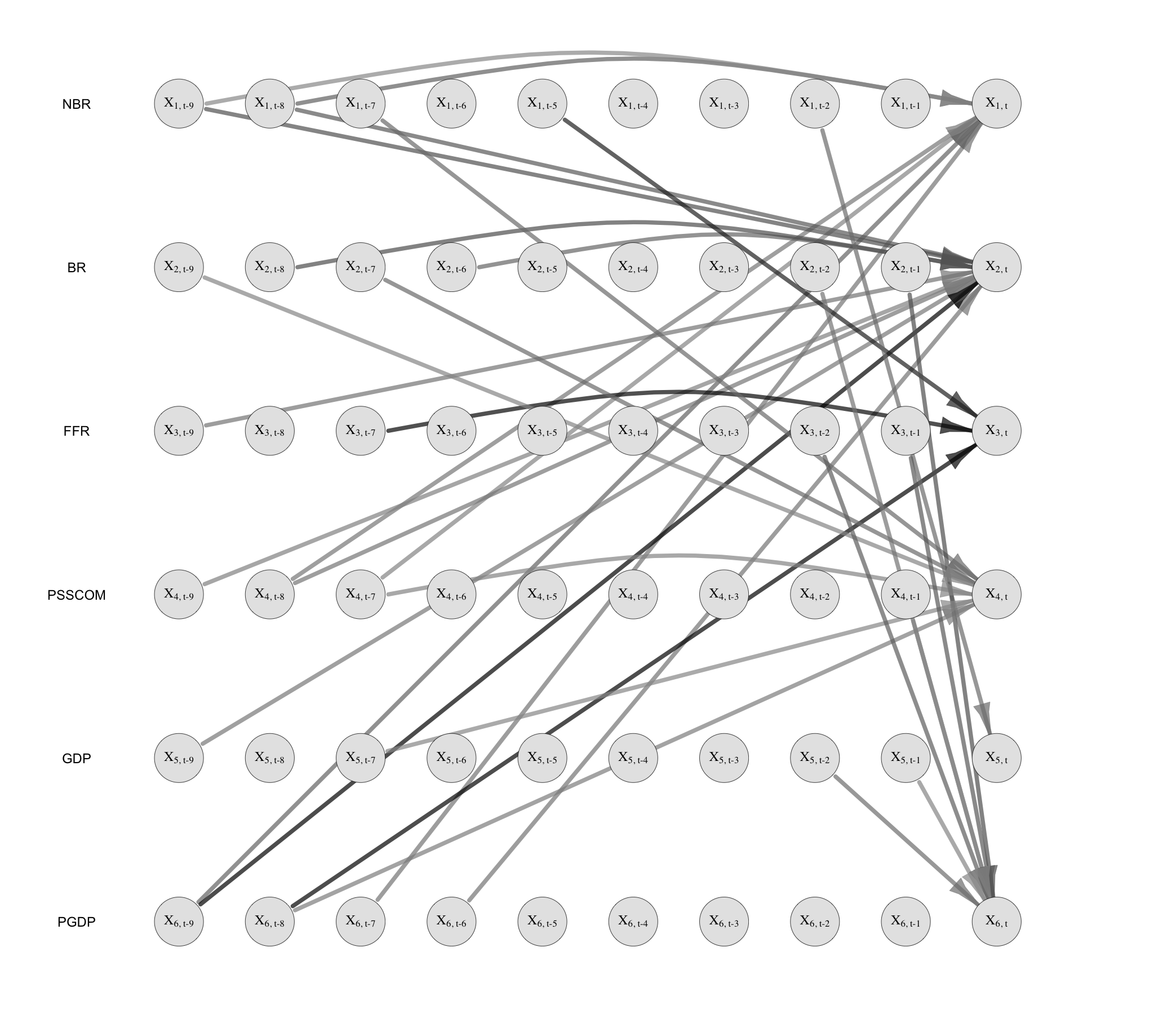}
    \caption{Macroeconomic data \cite{bernanke1998}, consisting of monthly
      observations of $NBR_t$, $BR_t$, $FFR_t$, $PSCCOM_t$, $GDP_t$,
      $PGDP_t$ between January 1965 and December 1996 (first differences of
      log-transformed values, see text). An edge encodes a substantial estimated effect for the
      squared response $\tilde{ \E}[X_{c_1,t}^2\mid\Do(X_{c_2,t-s}=x)]$
      with instantaneous effects.} 
    \label{fig:moneta_instant}
\end{figure}

\section{Conclusions}

Within the framework of stationary Markovian processes, we considered a
simple method based on observational time series data to 
infer the effect of interventions. We showed 
that a marginal integration estimator, called \mbox{MINT-T}, recovers the true
intervention effect with optimal nonparametric rate $n^{-2/5}$ under some 
regularity conditions and assuming no instantaneous effects in
  multivariate settings. This
is the optimal convergence rate for the 
estimation of a one-dimensional twice-differentiable function. Even
though it is infeasible to estimate the data generating stochastic process
in a nonparametric way, \mbox{MINT-T} is fully nonparametric and remains largely unaffected
by the curse of dimensionality assuming smoothness and additional
  regularity conditions. The advantage of our method is that we do
  not require knowledge of an adjustment set for causal effects: instead,
  we only need to specify an upper bound for the order of the underlying
  Markovian process. Even in presence of time-instantaneous effects, 
the methodology is shown to provide interesting results, avoiding false
positive statements. 
  
  Double robust methods require the correct specification of either the
  regression model or the propensity score model for consistent estimation
  of $\E[Y\,\vert\,\Do(X=x)]$ (cf.~\citet{van2003}). Typically, $X$ is a
  binary treatment variable. \mbox{MINT-T}, on the other hand, considers
  continuous intervention variables and is fully nonparametric as it does
  not require the specification of any model but assumes a
    sufficient amount of smoothness. We refer to
  \citet{ernest2015} for a more in-depth comparison of marginal integration
  and doubly robust methods.

Our theoretical assumptions include smoothness and the use of
  higher order kernel: we found that MINT-T, with an implementation
    based on boosting instead of using an explicit higher-order kernel,
    performed well in simulations with 
    smooth underlying conditional mean functions. We compared
\mbox{MINT-T} to a reference method: it fits a generalised 
additive model and infers the causal effect via simulation, as if the data
were generated from such an additive model. If the model is misspecified,
the reference method is inconsistent: this is in contrast to \mbox{MINT-T}
which does not depend on the specification of a time series model. In our
empirical studies, \mbox{MINT-T} outperformed the reference on all tested models except for some univariate $\AR$-models.
In addition, \mbox{MINT-T} provided an acceleration of the computational
time by a factor of 85 on average and in fact, \mbox{MINT-T} is computationally
efficient and feasible for multivariate time series in potentially
large-dimensional problems. 

Inferring the causal effect from readily available observational time series
data can offer helpful guidelines for researchers who wish to design
experiments before committing to irreversible and comprehensive
interventions. \mbox{MINT-T} provides a feasible, fully nonparametric tool
for this task.

\appendix
\section{Proof of Theorem~\ref{thm1}}\label{proofthm1}
\begin{proof}[Sketch of the proof]
	\begin{sloppypar}The proof follows immediately from Theorem 1 and Remark 3 \cite{fan1998} by including some modifications for dependent variables and choosing the weight function $W(\cdot)$ to be identical to one. In Equation (6.1) in \cite{fan1998}, we apply Theorem 2.21 in \cite{fan2005}. In Equation (6.4) and on p.~962 in~\cite{fan1998}, we replace the uniform convergence of kernel density estimators for the i.i.d.\ case by a similar result for dependent variables in \cite{hansen2008}. In order to obtain the final result $\hat{\E}(X_{c_1,t}\,\vert\, \Do(X_{c_2,t-s}=x))-\E(X_{c_1,t}\,\vert\, \Do(X_{c_2,t-s}=x))=O(h_1^2)+O_p(1/\sqrt{nh_1})$, it remains to show that each of the following terms
		$$n^{-1}\sum_{j=1}^n \epsilon_j^* K_{h_1}(X_{c_2,j-s}-x),$$
		$$T_{n,1}=n^{-2}\sum_{i\neq j}\Gamma(\mathbf{X}_{i-s}^{\Superset})\tilde{r}_{ij}, $$
		$$T_{n,2}=n^{-2}\sum_{i\neq j} W(\mathbf{X}_{i-s}^{\Superset}) \tilde{r}_{ij} p^{-1}(\mathbf{x}^{i-s})p^{(1,0)}(\mathbf{x}^{i-s})^T \mu_2(K)(X_{c_2,j-s}-x), $$
		$$T_{n,3}=n^{-2}\sum_{i\neq j}\epsilon_jK_{h_1}(X_{c_2,j-s}-x)V_{ij} \text{ and}$$
		$$T_{n,4}=n^{-2}\sum_{i\neq j} W(\mathbf{X}_{i-s}^{\Superset}) A_j(\mathbf{x}^{i-s})\epsilon_{j}p^{-1}(\mathbf{x}^{i-s})p^{(1,0)}(\mathbf{x}^{i-s})^T \mu_2(K)(X_{c_2,j-s}-x) $$
		is of order $O(h_1^2)+O_p(1/\sqrt{nh_1})$. Here,
                $\mathbf{x}^{i-s}=(x, \mathbf{X}_{i-s}^{\Superset})$,
                $\epsilon_j^*=\Gamma(\mathbf{X}_{j-s}^{\Superset})p(\mathbf{X}_{j-s}^{\Superset})(X_{c_1,j}-m(X_{c_2,j-s},\mathbf{X}_{j-s}^{\Superset}))$,
               where $m(\cdot)$ denotes the regression function and
                $\Gamma(\mathbf{X}_{j-s}^{\Superset})=W(\mathbf{X}_{j-s}^{\Superset})/p(\mathbf{x}^{i-s})$. Furthermore,
                $\hat{r}_{ij}=m(X_{c_2,j-s},\mathbf{X}_{j-s}^{\Superset})-m(x,\mathbf{X}_{i-s}^{\Superset})-f'_1(x)^T(X_{c_2,j-s}-x)$,
                $\tilde{r}_{ij}=A_j(\mathbf{x}^{i-s})\hat{r}_{ij}-\E
                A_j(\mathbf{x}^{i-s})\hat{r}_{ij}$ and
                $A_j(\mathbf{x}^{i-s})=K_{h_1}(X_{c_2,j-s}-x)L_{h_2}(\mathbf{X}_{j-s}^{\Superset}-\mathbf{X}_{i-s}^{\Superset})$. This
                can be achieved by calculating the first and second moments
                and invoking the covariance bound in Proposition 2.5(ii)
                in \citet{fan2005}. A detailed proof is given next.\end{sloppypar}
\end{proof}

\begin{proof}[Proof of Theorem~\ref{thm1}]
	The proof follows along the lines of the proof of Theorem 1 in \citet{fan1998} by replacing $Y$ with $X_{c_1,t}$, $X_1$ with
        $X_{c_2,t-s}$, $X_2$ by $\mathbf{X}_{t-s}^{\mathcal{S}}$ and $x_1$
        with $x$. For simplicity of notation, we shall neglect the discrete
        variable $X_3$. To avoid confusion, we will keep the notation from
        \citet{fan1998}. For example, $p$ will refer to the dimension of
        the variable $X_1$ instead of the time lag within this proof. Also,
        $\E_j(\cdot)$ will denote the expectation with respect to all
        variables except for $X_j$. In our case, the weight function
        $W(\cdot)$ \cite{fan1998} is identical to 1. Assumption~A.1
        \cite{fan1998} is thereby satisfied. In order to adapt Theorem 1
        \cite{fan1998} for dependent variables, we require the following
        modifications. 

	In Equation (6.1), we apply Theorem 2.21 from \citet{fan2005}. The
        variables $m(x^i)W(X_{2i})$ are bounded due to Assumption~1.6 and
        Assumption~A.1 \cite{fan1998}. Also,
        $\sum_{k\geq1}\alpha_k<\infty$ due to Assumption~1.1. Hence,
        condition (ii) of Theorem 2.21 \cite{fan2005} is satisfied. 

	In Equation (6.4) and on p. 962 of \citet{fan1998}, we replace the
        uniform convergence of kernel density estimators for the
        i.i.d. case by Theorem 2 in \citet{hansen2008} for dependent
        variables. 

	Thus, we can rewrite Equation (6.5) \cite[p.963]{fan1998} as
	\begin{equation}
	\hat{f}_1^*(x_1)-f_1^*(x_1)=O(h_1^2)+o_p(1/\sqrt{nh_1^p})+T_{n,1}+T_{n,2}+T_{n,3}+T_{n,4}+n^{-1}\sum_{j=1}^n \epsilon_j^* K_{h_1}(X_{1j}-x_1), \nonumber
	\end{equation}
	where $\epsilon_j^*=\Gamma(X_{2j})p(X_{2j})\epsilon_j$, $\Gamma(X_{2j})=W(X_{2j})/p(x^j)$ and  $\epsilon_j=Y-m(X_j)$, where $m(\cdot)$ denotes the regression function. $\epsilon_j^*$, $\Gamma(\cdot)$ are bounded due to Assumptions 1.2-1.3, 1.6 and Assumption~A.1 \cite{fan1998}.  We will show that each of the remaining terms
	$$n^{-1}\sum_{j=1}^n \epsilon_j^* K_{h_1}(X_{1j}-x_1),$$
	$$T_{n,1}=n^{-2}\sum_{i\neq j}\Gamma(X_{2i})\tilde{r}_{ij}, $$
	$$T_{n,2}=n^{-2}\sum_{i\neq j} W(X_{2i}) \tilde{r}_{ij} p^{-1}(x^i)p^{(1,0)}(x^i)^T \mu_2(K)(X_{1j}-x_1), $$
	$$T_{n,3}=n^{-2}\sum_{i\neq j}\epsilon_jK_{h_1}(X_{1j}-x_1)V_{ij} \text{ and}$$
	$$T_{n,4}=n^{-2}\sum_{i\neq j} W(X_{2i}) A_j(x^i)\epsilon_{j}p^{-1}(x^i)p^{(1,0)}(x^i)^T \mu_2(K)(X_{1j}-x_1) $$
	is of order $O_p(1/\sqrt{nh_1^p})$. Here, $V_{ij}=\Gamma(X_{2i})L_{h_2}(X_{2j}-X_{2i})-p(X_{2j})\Gamma(X_{2j})$, $\mu_2(K)=\int u^2 K(u) du$, $\tilde{r}_{ij}=A_j(x^i) \hat{r}_{ij} - \E A_j(x^i) \hat{r}_{ij}$ and $\hat{r}_{ij}=m(X_j)-m(x^i)-f'_1(x_1)^T (X_{1j}-x_1)$.
	With this goal in mind, we calculate the first and second moments of the sums. The first moments
	$\E \epsilon_j^* K_{h_1}(X_{1j}-x_1)=0$, $\E T_{n,3}=0$ and  $\E T_{n,4}=0$ as $\E(\epsilon_{j}\,\vert\,X_j)=0$, $\E_j (V_{ij}\,\vert\,X_j)=O(1)$ and  $\E_j(L_{h_2}(X_{2j}-X_{2i})\,\vert\, X_j)=O(1)$ almost surely (see Lemma~\ref{lemma1}) and $W(X_{2i}),\,p^{-1}(x^i),\,p^{(1,0)}(x^i),\,X_{1j}$ are bounded functions due to Assumptions 1.2-1.3 and Assumption~A.1 \cite{fan1998}. Also, $\E T_{n,1}=0$ and $\E T_{n,2}=0$ because $\E \tilde{r}_{ij}=0$ and $W(X_{2i}),\,p^{-1}(x^i),\,p^{(1,0)}(x^i),\,X_{1j}$ are bounded functions.
	By Theorem 14.4-1 \cite{bishop1975}, it is now sufficient to show that the second moments are of order $O(n^{-1}h_1^{-p})$. The calculation of the second moment is slightly more involved, and throughout our derivations $C$ will refer to a positive but not necessarily the same constant.
	\begin{eqnarray}
	&&\E \left(n^{-1}\sum_{j=1}^n \epsilon_j^* K_{h_1}(X_{1j}-x_1)\right)^2 \nonumber\\
	&\leq&Cn^{-2}\sum_{i,j}\E \left( \epsilon_j K_{h_1}(X_{1j}-x_1) \epsilon_i 	K_{h_1}(X_{1i}-x_1) \right)\nonumber\\
	&\leq&C n^{-1}\sum_{k=1}^{n-1} \left(1-\frac{k}{n}\right) \Cov \left( \epsilon_1 K_{h_1}(X_{11}-x_1), \epsilon_{k+1} K_{h_1}(X_{1k+1}-x_1) \right)+C n^{-1}\Var(\epsilon_1 K_{h_1}(X_{11}-x_1)) \nonumber\\
	&\leq&C n^{-1}\sum_{k=1}^{n-1}  \Cov \left( \epsilon_1 K_{h_1}(X_{11}-x_1), \epsilon_{k+1} K_{h_1}(X_{1k+1}-x_1) \right) +O(n^{-1}h_1^{-p}) \nonumber
	\end{eqnarray}
	In the last line, we used
	\begin{eqnarray}
	\Var(\epsilon_1 K_{h_1}(X_{11}-x_1)) &=&\E(\epsilon_1^2 K^2_{h_1}(X_{11}-x_1))\nonumber\\
	&\leq&C\E( K^2_{h_1}(X_{11}-x_1))\nonumber\\
	&=&C\int K^2_{h_1}(X_{11}-x_1)p(X_{11})dX_{11}\nonumber\\
	&=&Ch_1^{-p}\int K^2(u)p(x_1+h_1 u)du\nonumber\\
	&=&Ch_1^{-p}p(x_1)\int K^2(u)du+o(h_1^{-p})\nonumber\\
	&=&O(h_1^{-p}).\label{eq:kernelvar}
	\end{eqnarray}
	The covariance term can be bounded by a constant. 
	\begin{eqnarray}
&&\Cov \left( \epsilon_1 K_{h_1}(X_{11}-x_1), \epsilon_{k+1} K_{h_1}(X_{1k+1}-x_1) \right)\nonumber\\
&\leq& C \E \left( K_{h_1}(X_{11}-x_1) K_{h_1}(X_{1k+1}-x_1) \right)\nonumber\\
&\leq& C \sup p(X_{11},X_{1k+1}) \leq C_1 \nonumber
	\end{eqnarray}
In the last step, we used Assumption~1.5. On the other hand, the covariance between two bounded, random variables $X_{1j}$ and $X_{1l}$ is bounded by a constant times the mixing coefficient $\alpha_{\vert j-l\vert}$ according to Proposition 2.5(ii) in \citet{fan2005}. The covariance bound together with Assumption~1.7 yields
$$\Cov \left( \epsilon_1 K_{h_1}(X_{11}-x_1), \epsilon_k K_{h_1}(X_{1k+1}-x_1) \right)\nonumber \leq C_2 h_1^{-2p} \alpha_k.$$
Let $b_n \rightarrow \infty$ be a sequence of integers. Then,
\begin{eqnarray}
&&\sum_{k=1}^{n-1}  \Cov \left( \epsilon_1 K_{h_1}(X_{11}-x_1), \epsilon_{k+1} K_{h_1}(X_{1k+1}-x_1) \right)\label{eq:trick1}\\
&=&\sum_{k=1}^{b_n-1}  \Cov \left( \epsilon_1 K_{h_1}(X_{11}-x_1), \epsilon_{k+1} K_{h_1}(X_{1k+1}-x_1) \right)\nonumber\\
&+&\sum_{k=b_n}^{n-1}  \Cov \left( \epsilon_1 K_{h_1}(X_{11}-x_1), \epsilon_{k+1} K_{h_1}(X_{1k+1}-x_1) \right)\nonumber\\
&\leq&\sum_{k=1}^{b_n-1}  C_1 +\sum_{k=b_n}^{n-1}   C_2 h_1^{-2p} \alpha_k \nonumber\\
&\leq&O(b_n)  +\sum_{k=b_n}^{n-1}   C_2 h_1^{-2p} k^{-\beta} \nonumber\\
&=&O(b_n)  +O(b_n^{-\beta+1} h_1^{-2p}) \label{eq:trick2}
\end{eqnarray}
Choosing $b_n=h_1^{-2p/\beta}$ gives us the desired rate
$O(h_1^{-2p/\beta})=o(h_1^{-p})$ for $\beta>2$.

We now turn to $T_{n,1}$ and $T_{n,2}$. Both $T_{n,1}$ and $T_{n,2}$ are smaller than a constant times $n^{-2}\sum_{i\neq j}\tilde{r}_{ij}$. Therefore,
\begin{equation}
\E T_{n,1}^2 \text{ and } \E T_{n,2}^2\leq C n^{-4}\sum_{i\neq j}\sum_{k\neq l}\E \left(\tilde{r}_{ij}\tilde{r}_{kl}\right). \nonumber 
\end{equation}
For four different indices, i.e., $k,l \not\in \{i,j\} $, the summand is zero because $\E\tilde{r}_{ij}=0$ and $\tilde{r}_{ij}=O(h_1^2+h_2)$ (see \cite[p.965]{fan1998} and Lemma~\ref{lemma2}). For three different indices, the sum is at most of order $O(n^{-1}(h_1^4+h_2^2))$ and for two different indices of order $O(n^{-2}(h_1^4+h_2^2))$.
Hence, we obtain the desired rate $o(n^{-1})$.

It remains to show that $\E T_{n,3}^2$ and $\E T_{n,4}^2$ are $O(n^{-1}h_1^{-p})$. By performing a variable transformation and a Taylor expansion one can see that $\E_j(V_{ij}\,\vert\, X_j, X_k, X_l)=O(1)$, $\E_j(V_{ij}^2\,\vert\, X_j)=O(h_2^{-q})$ almost surely (see Lemma~\ref{lemma1}). In addition, $V_{ij}=O(h_2^{-q})$ due to Assumption~1.2-1.3, 1.7 and Assumption~A.1 \cite{fan1998}. We will treat the cases of two, three and four different indices separately. First for four different indices $k,l \not\in \{i,j\} $
\begin{eqnarray}
\E T_{n,3}^2&=&n^{-4}\sum_{i\neq j}\sum_{k\neq l} \E \left( \epsilon_jK_{h_1}(X_{1j}-x_1)V_{ij} \epsilon_l K_{h_1}(X_{1l}-x_1)V_{kl} \right)\nonumber \\
&=&n^{-4}\sum_{i\neq j}\sum_{k\neq l} \E \left( \epsilon_jK_{h_1}(X_{1j}-x_1)\E_j(V_{ij}\,\vert\, X_j, X_k, X_l)\epsilon_l K_{h_1}(X_{1l}-x_1)V_{kl} \right)\nonumber \\
&\leq&C n^{-3}\sum_{j\neq \{k,l\}}\sum_{k\neq l} \E \left( \epsilon_jK_{h_1}(X_{1j}-x_1)\epsilon_l K_{h_1}(X_{1l}-x_1)\E_l(V_{kl}\,\vert\, X_j, X_l) \right)\nonumber \\
&\leq&C n^{-2}\sum_{j\neq l} \E \left( \epsilon_jK_{h_1}(X_{1j}-x_1) \epsilon_l K_{h_1}(X_{1l}-x_1) \right)\nonumber \\
&\leq&C n^{-1}\sum_{k=1}^{n-1} \Cov \left( \epsilon_1K_{h_1}(X_{11}-x_1), \epsilon_{k+1} K_{h_1}(X_{1k+1}-x_1) \right)\nonumber \\
&=&o(n^{-1}h_1^{-p})\nonumber 
\end{eqnarray}
In the last Equation, we used Equations \eqref{eq:trick1} - \eqref{eq:trick2}. 
Next, for three different indices $k\in \{i,j\}$, $l\not\in \{i,j\}$
\begin{eqnarray}
&&  n^{-4} \sum_{i\neq j} \sum_{l\neq \{i,j\}} \E \left(\epsilon_jK_{h_1}(X_{1j}-x_1)V_{ij}  \epsilon_lK_{h_1}(X_{1l}-x_1)V_{kl} \right) \nonumber\\
&\leq&  C n^{-4} h_2^{-q}  \sum_{i\neq j} \sum_{l\neq \{i,j\}} \E \left( \epsilon_{j}  K_{h_1}(X_{1j}-x_1) \E_j (V_{ij}\,\vert\, X_j, X_l) \epsilon_{l} K_{h_1}(X_{1l}-x_1) \right) \nonumber\\
&\leq&  C n^{-3} h_2^{-q}  \sum_{j\neq l} \Cov \left(\epsilon_{j} K_{h_1}(X_{1j}-x_1) , \epsilon_{l} K_{h_1}(X_{1l}-x_1) \right) \nonumber\\
&\leq&C n^{-2}h_2^{-q}\sum_{k=1}^{n-1} \Cov \left( \epsilon_1K_{h_1}(X_{11}-x_1), \epsilon_{k+1} K_{h_1}(X_{1k+1}-x_1) \right)\nonumber \\
&=&  o(n^{-2}h_1^{-p}h_2^{-q})=  o(n^{-1}) \nonumber
\end{eqnarray}
Also, for $k\not\in \{i,j\}$, $l=i$
\begin{eqnarray}
&&  n^{-4} \sum_{i\neq j} \sum_{k\neq \{i,j\}} \E \left(\epsilon_jK_{h_1}(X_{1j}-x_1)V_{ij}  \epsilon_i K_{h_1}(X_{1i}-x_1)V_{ki}  \right) \nonumber\\
&\leq&  C n^{-4}  h_2^{-q}\sum_{i\neq j} \sum_{k\neq \{i,j\}} \E \left(\epsilon_jK_{h_1}(X_{1j}-x_1)  \epsilon_i K_{h_1}(X_{1i}-x_1) \E_i (V_{ki}\,\vert\,X_i,X_j)  \right) \nonumber\\
&\leq& C n^{-3}  h_2^{-q}\sum_{i\neq j} \Cov \left(  \epsilon_{j} K_{h_1}(X_{1j}-x_1),  \epsilon_{i}  K_{h_1}(X_{1i}-x_1)\right) \nonumber\\
&\leq&C n^{-2}h_2^{-q}\sum_{k=1}^{n-1} \Cov \left( \epsilon_1K_{h_1}(X_{11}-x_1), \epsilon_{k+1} K_{h_1}(X_{1k+1}-x_1) \right)\nonumber \\
&=&  o(n^{-2}h_1^{-p}h_2^{-q})=  o(n^{-1}) \nonumber
\end{eqnarray}
And for $k\not\in \{i,j\}$, $l=j$
\begin{eqnarray}
&&  n^{-4} \sum_{i\neq j} \sum_{k\neq \{i,j\}} \E \left(\epsilon^2_jK^2_{h_1}(X_{1j}-x_1)V_{ij} V_{kj} \right) \nonumber\\
&=&  n^{-4} \sum_{i\neq j} \sum_{k\neq \{i,j\}} \E \left(\epsilon^2_jK^2_{h_1}(X_{1j}-x_1)\E_j (V_{ij}\,\vert\,X_k, X_j) V_{kj} \right) \nonumber\\
&\leq&C  n^{-3} \sum_{j\neq k} \E \left(\epsilon^2_jK^2_{h_1}(X_{1j}-x_1) \E_j (V_{kj}\,\vert\,X_j) \right) \nonumber\\
&\leq&  Cn^{-2}  \sum_{j} \E \left(\epsilon^2_{j} K_{h_1}^2(X_{1j}-x_1) \right) \nonumber\\
&=& C n^{-1} \E \left( \epsilon^2_{1} K_{h_1}^2(X_{11}-x_1) \right) \nonumber\\
&=& O(n^{-1}h_1^{-p})\nonumber
\end{eqnarray}
Here, we used Equation \eqref{eq:kernelvar}. Furthermore, for two different indices $k=i$, $l=j$
\begin{eqnarray}
&&  n^{-4} \sum_{i\neq j}  \E \left(\epsilon^2_jK^2_{h_1}(X_{1j}-x_1)V^2_{ij} \right) \nonumber\\
&=&  n^{-4} \sum_{i\neq j}  \E \left(\epsilon^2_jK^2_{h_1}(X_{1j}-x_1)\E_j(V^2_{ij}\,\vert\,X_j) \right) \nonumber\\
&\leq&  C n^{-3}h_2^{-q}   \sum_{j} \E \left( \epsilon^2_{j} K_{h_1}^2(X_{1j}-x_1) \right) \nonumber\\
&=&  C n^{-2}h_2^{-q}\E \left( \epsilon^2_{1} K_{h_1}^2(X_{11}-x_1) \right) \nonumber\\
&=& O(n^{-2}h_1^{-p}h_2^{-q})=o(n^{-1})\nonumber
\end{eqnarray}
Again using Equation \eqref{eq:kernelvar}. Lastly, $k=j$, $l=i$
\begin{eqnarray} 
&&  n^{-4} \sum_{i\neq j}  \E \left(\epsilon_{j} K_{h_1}(X_{1j}-x_1) V_{ij} \epsilon_{i} K_{h_1}(X_{1i}-x_1) V_{ji}  \right) \nonumber\\
&\leq& C n^{-4}  h_2^{-2q} \sum_{i\neq j}\E \left(\epsilon_{j} K_{h_1}(X_{1j}-x_1) \epsilon_{i} K_{h_1}(X_{1i}-x_1) \right) \nonumber\\
&=&  C n^{-4}  h_2^{-2q} \sum_{i\neq j}\Cov \left( \epsilon_{j} K_{h_1}(X_{1j}-x_1), \epsilon_{i} K_{h_1}(X_{1i}-x_1) \right) \nonumber\\
&\leq&C n^{-3}h_2^{-2q}\sum_{k=1}^{n-1} \Cov \left( \epsilon_1K_{h_1}(X_{11}-x_1), \epsilon_{k+1} K_{h_1}(X_{1k+1}-x_1) \right)\nonumber \\
&=& o(n^{-3}h_1^{-p}h_2^{-2q})=o(n^{-1})\nonumber
\end{eqnarray}
Combining above rates yields $\E T_{n,3}^2=O(n^{-1}h_1^{-p})$.
Even though the previous calculations were limited to $T_{n,3}$,  $\E T_{n,4}^2=O(n^{-1}h_1^{-p})$ follows in exactly the same manner by replacing $V_{ij}$ with $L_{h_2}(X_{2j}-X_{2i})$ and using  $\E_j(L_{h_2}(X_{2j}-X_{2i})\,\vert\, X_j, X_k, X_l)=O(1)$,  $\E_j(L^2_{h_2}(X_{2j}-X_{2i})\,\vert\, X_j)=O(h_2^{-q})$ almost surely (see Lemma~\ref{lemma1}). Together with Remark 3 \cite{fan1998}, this concludes the proof of Theorem~\ref{thm1}.
\end{proof}

\subsection{Lemmata}\label{app:moments}

\begin{lemma}\label{lemma1}
	$\E_j(V_{ij}\,\vert\, X_j, X_k, X_l)=O(1)$, $\E_j(V^2_{ij}\,\vert\, X_j)=O(h_2^{-q})$, $\E_j \left(L_{h_2}(X_{2j}-X_{2i})\,\vert\,X_j, X_k, X_l\right)=O(1)$ and $\E_j \left(L^2_{h_2}(X_{2j}-X_{2i})\,\vert\, X_j\right)=O(h_2^{-q})$ almost surely.
\end{lemma}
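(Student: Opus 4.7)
My plan is to write each conditional expectation as an integral against the appropriate conditional density of $X_{2i}$, perform the standard kernel change of variables $w=(X_{2j}-v)/h_2$ to absorb the bandwidth factors, and then bound the resulting integrand using the boundedness assumptions on the conditional density (Assumption 1.4), on the weight-density ratio $\Gamma=W/p(x^{\cdot})$ (from Assumptions 1.3 and A.1), and on the kernel $L$ (Assumption 1.7).

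First I would dispatch the two statements involving only $L_{h_2}$, since they already contain the essential computation. Writing $\E_j(L_{h_2}(X_{2j}-X_{2i})\mid X_j,X_k,X_l) = \int L_{h_2}(X_{2j}-v)\,p(v\mid X_j,X_k,X_l)\,dv$ and substituting $w=(X_{2j}-v)/h_2$ (which cancels the $h_2^{-q}$ inside $L_{h_2}$ against the Jacobian $h_2^{q}$) yields $\int L(w)\,p(X_{2j}-h_2 w\mid X_j,X_k,X_l)\,dw$, which is $O(1)$ almost surely since the conditional density is bounded by Assumption 1.4 and $L$ is integrable on bounded support. For the second moment $\E_j(L_{h_2}^2(X_{2j}-X_{2i})\mid X_j)$, the same substitution applied to $L_{h_2}^2=h_2^{-2q}L^2((\cdot)/h_2)$ leaves behind exactly one factor of $h_2^{-q}$, and the remaining integral $\int L^2(w)\,p(X_{2j}-h_2 w\mid X_j)\,dw$ is again $O(1)$, giving the claimed $O(h_2^{-q})$ rate.

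Next I would reduce the statements about $V_{ij}=\Gamma(X_{2i})L_{h_2}(X_{2j}-X_{2i})-p(X_{2j})\Gamma(X_{2j})$ to the previous case. For the first moment, conditioning on $X_j,X_k,X_l$ and applying the same substitution yields $\int \Gamma(X_{2j}-h_2 w)L(w)\,p(X_{2j}-h_2 w\mid X_j,X_k,X_l)\,dw - p(X_{2j})\Gamma(X_{2j})$, and both terms are uniformly bounded since $\Gamma$ is bounded by Assumption 1.3 (density bounded away from zero near $x_1$) and $W\equiv 1$; hence the difference is $O(1)$. For the second moment, expanding $V_{ij}^2$ into three terms, the cross term and the deterministic term are $O(1)$ by the argument above, while the square term $\Gamma^2(X_{2i})L_{h_2}^2(X_{2j}-X_{2i})$ contributes $O(h_2^{-q})$ by the $L^2$ computation, with $\Gamma^2$ absorbed into a bounded prefactor. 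Summing yields $\E_j(V_{ij}^2\mid X_j)=O(h_2^{-q})$.

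There is no real obstacle here: the lemma is essentially a bookkeeping exercise once one notes that all the non-kernel factors appearing in $V_{ij}$ and in the conditional densities are almost-surely uniformly bounded on the relevant supports. The only care required is to ensure that Assumption 1.4 indeed covers the specific conditioning event needed for each line (in particular, that boundedness of the conditional density $p(X_{2i}\mid X_j,X_k,X_l)$ transfers correctly when we integrate only over $X_{2i}$), which is immediate since marginalising out the other components of $X_i$ preserves the uniform bound.
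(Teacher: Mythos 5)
Your proposal is correct and follows essentially the same route as the paper's proof: a kernel change of variables, boundedness of the conditional density of $X_{2i}$ (Assumption 1.4) and of $\Gamma$, and the expansion of $V_{ij}^2$ into square, cross and constant terms with only the squared-kernel term contributing the $h_2^{-q}$ factor. The only difference is cosmetic — you treat the $L_{h_2}$ statements first and deduce the $V_{ij}$ ones, whereas the paper argues for $V_{ij}$ and notes the $L_{h_2}$ claims follow from the same computations.
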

\begin{proof}

\begin{eqnarray}
\E_j(V_{ij}\,\vert\, X_j, X_k, X_l)&=& \int \Gamma(X_{2i})L_{h_2}(X_{2j}-X_{2i})p(X_{2i}\,\vert\, X_j, X_k, X_l)dX_{2i} - p(X_{2j})\Gamma(X_{2j})\nonumber \\
&\leq&C \int \Gamma(X_{2j}+h_2u)L(u)du - p(X_{2j})\Gamma(X_{2j})\nonumber\\
&=& C\Gamma(X_{2j})- p(X_{2j})\Gamma(X_{2j})+ o(1)=O(1).\nonumber
\end{eqnarray}

\begin{eqnarray}
\E_j(V_{ij}^2\,\vert\,X_j)&=& \E_j \left(\Gamma^2(X_{2i}) L^2_{h_2}(X_{2j}-X_{2i})-2p(X_{2j})\Gamma(X_{2j})\Gamma(X_{2i})L_{h_2}(X_{2j}-X_{2i})\right. \nonumber\\
&+&\left. p^2(X_{2j})\Gamma^2(X_{2j})\,\vert\,X_j\right)\nonumber\\
&=& \E_j \left(\Gamma^2(X_{2i}) L^2_{h_2}(X_{2j}-X_{2i})\,\vert\,X_j\right)- 2p(X_{2j})\Gamma(X_{2j})\int\Gamma(X_{2i})L_{h_2}(X_{2j}-X_{2i})\nonumber\\
&&p(X_{2i}\,\vert\,X_j)dX_{2i}+p^2(X_{2j})\Gamma^2(X_{2j})\nonumber\\
&\leq& \E_j \left(\Gamma^2(X_{2i}) L^2_{h_2}(X_{2j}-X_{2i})\,\vert\,X_j\right)+C \vert p(X_{2j})\Gamma(X_{2j})\int\Gamma(X_{2j}+h_2u)L(u)du\vert\nonumber\\
&+&p^2(X_{2j})\Gamma^2(X_{2j})\nonumber\\
&=& \E_j \left(\Gamma^2(X_{2i}) L^2_{h_2}(X_{2j}-X_{2i})\,\vert\,X_j\right)+C p(X_{2j})\Gamma^2(X_{2j})+p^2(X_{2j})\Gamma^2(X_{2j})+o(1)\nonumber\\
&=& \E_j \left(\Gamma^2(X_{2i}) L^2_{h_2}(X_{2j}-X_{2i})\,\vert\,X_j\right)+O(1)\nonumber\\
&=& O(h_2^{-q}),\nonumber
\end{eqnarray}
where we used
\begin{eqnarray}
&&\E_j \left(\Gamma^2(X_{2i}) L^2_{h_2}(X_{2j}-X_{2i})\,\vert\,X_j\right)\nonumber \\
&=& \int \Gamma^2(X_{2i}) L^2_{h_2}(X_{2j}-X_{2i}) p(X_{2i}\,\vert\,X_j)dX_{2i} \nonumber \\
&\leq&C h_2^{-q} \int \Gamma^2(X_{2j}+h_2u) L^2(u)du \nonumber \\
&=&Ch_2^{-q} \Gamma^2(X_{2j}) \int L^2(u)du + o(h_2^{-q})=O(h_2^{-q}). \nonumber 
\end{eqnarray}
Above calculations imply $\E_j \left(L_{h_2}(X_{2j}-X_{2i})\,\vert\,X_j, X_k, X_l\right)=O(1)$ and $\E_j \left(L^2_{h_2}(X_{2j}-X_{2i})\,\vert\,X_j\right)=O(h_2^{-q})$ almost surely.
\end{proof}

\begin{lemma}\label{lemma2}
$\E A_j(x^i)\hat{r}_{ij}=O(h_1^2)$.
\end{lemma}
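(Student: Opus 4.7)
The plan is to reduce the expectation to an explicit triple integral, then exploit symmetry of $K$ and the order-$d$ property of $L$ to kill the first-order terms of a two-variable Taylor expansion of $m$ and the joint density. First, I would write the expectation against the joint density of $(X_i,X_j)$. Because the integrand $A_j(x^i)\hat{r}_{ij}$ does not involve $X_{1i}$, integrating out $X_{1i}$ yields an integral against the marginal density $q(X_{2i},X_{1j},X_{2j})$, which by Assumption~1.5 is bounded and, together with Assumption~1.3, enjoys enough smoothness. Next I would change variables $u=(X_{1j}-x_1)/h_1$, $v=(X_{2j}-X_{2i})/h_2$, absorbing the Jacobian into $K(u)L(v)\,du\,dv$.

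After the substitution, the integrand involves $m(x_1+h_1u,X_{2i}+h_2v)-m(x_1,X_{2i})-f'_1(x_1)h_1u$, multiplied by $q(X_{2i},x_1+h_1u,X_{2i}+h_2v)K(u)L(v)$. I would Taylor expand $m$ around $(x_1,X_{2i})$ to order $2$ in its first argument and to order $d$ in its second argument, using Assumption~1.6, and separately Taylor expand $q$ to matching orders using Assumption~1.3. The contribution $h_1 u\,m^{(1,0)}(x_1,X_{2i})-f'_1(x_1)h_1u$, when paired with the leading order of $q$, integrates against $\int uK(u)\,du=0$ by symmetry of $K$ (Assumption~1.7). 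The pure-$v$ polynomial terms $(h_2 v)^k m^{(0,k)}(x_1,X_{2i})/k!$ for $1\le k\le d-1$ vanish upon integration against $L$ because $L$ is an order-$d$ kernel. What survives is the quadratic-in-$u$ term $\tfrac{1}{2}h_1^2u^2 m^{(2,0)}$, cross terms of the form $h_1^a h_2^b$ with $a\ge 1$ arising from the expansion of $q$ combined with the first-order-in-$u$ piece of $m$, and a remainder of size $h_2^d$ coming from the $d$-th order Taylor remainder in $v$.

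Controlling the surviving terms gives the desired rate. The $h_1^2u^2m^{(2,0)}$ term contributes $O(h_1^2)$ because $\int u^2K(u)\,du<\infty$ and $m^{(2,0)}$, $q$ are bounded. Cross terms are at worst $O(h_1 h_2)=o(h_1^2)$ under the bandwidth regime. The $v$-remainder is $O(h_2^d)$ and, by the bandwidth condition $h_2^d/h_1^2\to 0$ from Assumption~1.9, is $o(h_1^2)$. Summing gives $\E A_j(x^i)\hat{r}_{ij}=O(h_1^2)$.

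The main obstacle is the careful bookkeeping of the mixed $h_1^a h_2^b$ terms produced by expanding $m$ and $q$ simultaneously: unlike the i.i.d.\ setting of \citet{fan1998} where the density of $(X_i,X_j)$ factorises and the calculation is cleaner, here the dependence forces us to work with the joint density $q$ whose derivatives in $X_{1j}$ and $X_{2j}$ couple the $u$ and $v$ expansions. I would organise the terms by the total order $h_1^a h_2^b$ and verify, using Assumptions~1.3, 1.5, and the order-$d$ kernel property, that every term after the cancellations is either automatically $O(h_1^2)$ or absorbed as $o(h_1^2)$ via Assumption~1.9; the remaining pointwise bounds follow from boundedness of support (Assumption~1.2) so that all integrals and derivatives appearing in the expansion are finite.
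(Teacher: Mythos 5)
Your route is essentially the paper's: reduce $\E A_j(x^i)\hat r_{ij}$ to an explicit kernel-weighted integral, Taylor expand, kill the odd-in-$u$ terms by symmetry of $K$ and the low-order-in-$v$ terms by the order-$d$ property of $L$, and absorb the residual $O(h_2^d)$ via $h_2^d/h_1^2\to 0$. The paper organizes the same computation sequentially rather than jointly: it first integrates out $X_{2i}$ conditionally on $X_j$, which replaces $m(x_1,X_{2i})$ by $m(x_1,X_{2j})$ up to $O(h_2^d)$, and then performs a single one-dimensional expansion in $X_{1j}$ with the symmetric kernel $K$, treating the $f_1'(x_1)^T(X_{1j}-x_1)$ piece separately via $\int uK(u)\,du=0$; this ordering avoids mixed $h_1^ah_2^b$ terms altogether.

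The one step of yours that does not hold as stated is the claim that the cross terms are ``at worst $O(h_1h_2)=o(h_1^2)$ under the bandwidth regime.'' Assumption~1.9 does not imply $h_2=o(h_1)$: it only gives $h_2^d=o(h_1^2)$, and since the rate in Remark~\ref{remark:rate} requires $d>pl$, one typically has $d>2$, in which case $h_2=o(h_1^{2/d})$ is compatible with $h_2\gg h_1$; indeed the condition $h_1^4\log n/h_2^{pl}\to 0$ usually forces $h_2$ to be of larger order than $h_1$, so $h_1h_2\gg h_1^2$. If a genuine $O(h_1h_2)$ term survived, the lemma would fail at the claimed rate under the stated assumptions. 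What saves your argument is that no such term survives: after your change of variables, every polynomial term of the joint expansion of $m$ and the density has the form $h_1^a h_2^b\, c(x_1,X_{2i})\int u^aK(u)\,du\int v^bL(v)\,dv$, because all Taylor coefficients are evaluated at base points depending only on $(x_1,X_{2i})$, and this vanishes unless $a$ is even and either $b=0$ or $b\ge d$. Hence the surviving contributions (including the Taylor remainders) are $O(h_1^2)+O(h_2^d)$, and $h_2^d/h_1^2\to 0$ finishes the proof; with the bookkeeping closed this way, rather than by the invalid bandwidth comparison, your proof is correct. A minor caveat shared with the paper: both arguments implicitly use smoothness in $X_{2i}$ of the two-time-point (conditional) density, whereas Assumptions~1.3--1.5 state smoothness only for the single-time-point density and mere boundedness for the joint ones.
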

\begin{proof}
\begin{eqnarray}
\E A_j(x^i)\hat{r}_{ij}&=& \E K_{h_1}(X_{1j}-x_1)L_{h_2}(X_{2j}-X_{2i})[m(X_j)-m(x^i)-f'_1(x_1)^T(X_{1j}-x_1)] \nonumber 
\end{eqnarray}
Note, that
\begin{eqnarray}
	&&\E_j( L_{h_2}(X_{2j}-X_{2i})m(x^i)\,\vert\,X_j) \nonumber\\
	&=&\int L_{h_2}(X_{2j}-X_{2i})m(x_1,X_{2i})p(X_{2i}\,\vert\,X_j) dX_{2i} \nonumber \\
	&=&\int L(u)m(x_1,X_{2j}+h_2u)p(X_{2j}+h_2u\,\vert\,X_j) du \nonumber \\
	&=&m(x_1,X_{2j}) p(X_{2j}\,\vert\,X_j) + O(h_2^d) \nonumber 
\end{eqnarray}
almost surely. Here, we used that $L(\cdot)$ is a order $d$-kernel by Assumption~1.7.
Similarly,
\begin{equation}
	\E_j( L_{h_2}(X_{2j}-X_{2i})\,\vert\,X_j) 
	= p(X_{2j}\,\vert\,X_j) + O(h_2^d) \nonumber 
\end{equation}
almost surely. By Assumption~1.7, $K(\cdot)$ is a symmetric kernel. Hence,
\begin{equation}
	\E( K_{h_1}(X_{1j}-x_{1})) 
	= p(x_{1}) + O(h_1^2). \nonumber 
\end{equation}
Then,
\begin{eqnarray}
&&\E K_{h_1}(X_{1j}-x_1)L_{h_2}(X_{2j}-X_{2i})[m(X_j)-m(x^i)] \nonumber \\
&=&  \E K_{h_1}(X_{1j}-x_1)p(X_{2j}\,\vert\,X_{1j},X_{2j})[m(X_j)-m(x^j)]+O(h_2^d)\nonumber\\
&=&  \int \int  K_{h_1}(X_{1j}-x_1)[m(X_j)-m(x^j)] p(X_{2j}\,\vert\,X_{1j},X_{2j})p(X_{1j},X_{2j})dX_{1j}dX_{2j}+O(h_2^d)\nonumber\\
&=&  \int \int  K(u)[m(x_1+h_1u,X_{2j})-m(x_1,X_{2j})]du \,p(X_{2j}) dX_{2j}+O(h_2^d)\nonumber \\
&=&  \int [ m(x_1,X_{2j})-m(x_1,X_{2j})]p(X_{2j}) dX_{2j} +O(h_1^2) +O(h_2^d)\nonumber\\
&=&O(h_1^2). \nonumber
\end{eqnarray}
In the last line, we used the bandwidth condition $h_2^d=o(h_1^2)$. We strengthen Assumption~A.4 \cite{fan1998} such that  $f_1(\cdot)$ has bounded first and second derivative. Hence,
\begin{eqnarray}
&&\E K_{h_1}(X_{1j}-x_1)L_{h_2}(X_{2j}-X_{2i})f'_1(x_1)^T(X_{1j}-x_1)\nonumber \\
&\leq &C \E K_{h_1}(X_{1j}-x_1)(X_{1j}-x_1)\nonumber \\
&= &C \int h_1 uK(u)p(x_1+h_1u)du\nonumber \\
&= &C h_1p(x_1)\int uK(u)du+O(h_1^2)=O(h_1^2).\nonumber 
\end{eqnarray}

\end{proof}

\section{Additional simulations}\label{app:simulations}

\subsection{The choice of the bandwidth $h$}
\label{app:choiceh}
For $\ARCH$- and $\GARCH$-models, the MSE behaves in a different manner than for the models in Section~\ref{section:choiceh}. As explained in Section~\ref{section:choiceh}, the causal effect is identical to zero in $\ARCH$-models. $\GARCH$-models can be rewritten as $\ARCH(\infty)$. Therefore, the true Markovian order $p_0$ is infinity and the causal effect equals zero as well. In both cases, it is beneficial to choose a large bandwidth even if no boosting iterations are performed. The reason seems to be that estimating the zero causal effect function can be well done when choosing a large bandwidth: the estimator then approximates the mean of the underlying time series which is equal to zero as well. Boosting does not improve the initial kernel estimate. For large bandwidths, however, the difference between the estimate with and without boosting is negligible as shown in Figure~\ref{fig:bandwidth3}.

\begin{figure}[!htb]
       \centering
          \begin{subfigure}[b]{0.7\textwidth}
        \includegraphics[width=\textwidth]{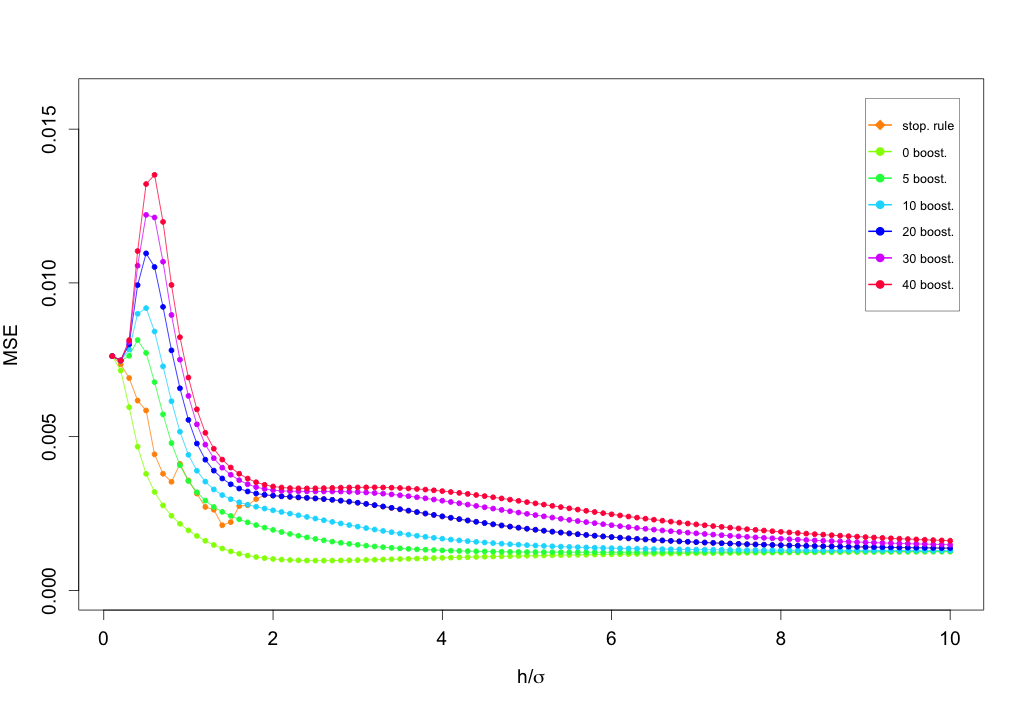}
        \caption{Model 3}
        \label{fig:model3}
    \end{subfigure}
    
    \begin{subfigure}[b]{0.7\textwidth}
        \includegraphics[width=\textwidth]{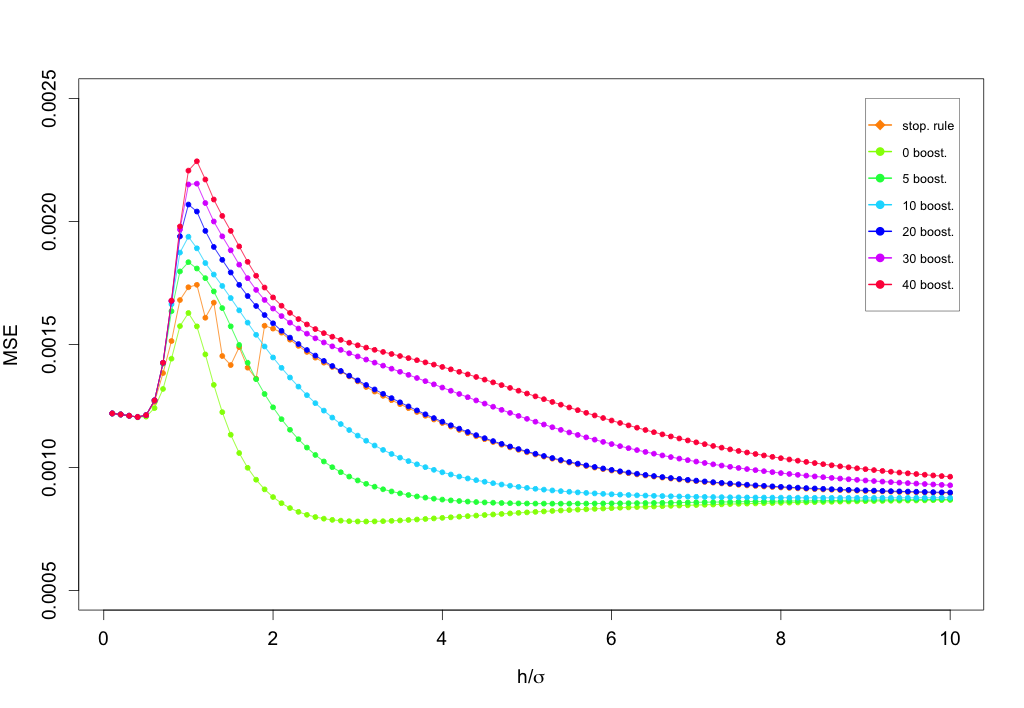}
        \caption{Model 4}
        \label{fig:model4}
    \end{subfigure}

    \caption{
      Dependence on bandwidth $h$. MSE values, for models 3 and 4, for $h$
   between $0.1\hat{\sigma}$ 
   and $10\hat{\sigma}$ ($x$-axis with scaled $h/\hat{\sigma}$) for \mbox{MINT-T},
   without boosting, with fixed number of 
   boosting iterations, and with stopping rules. The time lag $p$ is set to
      4 on model 3 and 10 on model 4.}\label{fig:bandwidth3} 
\end{figure}

\clearpage

\bibliography{mybib}
\bibliographystyle{plainnat}

\end{document}